\newtheorem{Theorem}{Theorem}
\newtheorem{Lemma}[Theorem]{Lemma}
\newtheorem{definition}{Definition}
\newtheorem{corollary}{Corollary}
\newcommand{\HilbertSpace}[0]{\ensuremath{\mathcal{H}}}
\newcommand{\DensityMatrices}[0]{\ensuremath{\mathcal{D}(\HilbertSpace)}}
\newcommand{\KD}[0]{\ensuremath{Q}}
\newcommand{\POVM}[0]{\ensuremath{E}}
\newcommand{\Abasis}[0]{\ensuremath{\mathcal{A}}}
\newcommand{\Bbasis}[0]{\ensuremath{\mathcal{B}}}
\newcommand{\g}[0]{\ensuremath{\bm{g}}}
\renewcommand{\c}[0]{\ensuremath{\bm{\chi}}}
\newcommand{\h}[0]{\ensuremath{\bm{h}}}
\newcommand{\Pauli}[0]{\ensuremath{P}}
\newcommand{\Hadamard}[0]{\ensuremath{\mathrm{H}}}
\newcommand{\CX}[0]{\ensuremath{\mathrm{CNOT}}}
\newcommand{\G}[0]{\ensuremath{\mathbb{Z}_2^n}}
\newcommand{\e}[0]{\ensuremath{\bm{\eta}}}
\newcommand{\z}[0]{\ensuremath{\bm{0}}}
\newcommand{\range}[0]{\ensuremath{\textrm{ran\;}}}
\newcommand{\inputstate}[0]{\ensuremath{\rho_{\mathrm{input}}}}
\newcommand{\mana}[0]{\ensuremath{\mathcal{M}}}
\newcommand{\DGBR}[0]{\ensuremath{W}}
\newcommand{\ubold}[0]{\ensuremath{\bm{u}}}
\newcommand{\vbold}[0]{\ensuremath{\bm{v}}}
\newcommand{\xket}[1]{\ensuremath{\ket*{#1^{(x)}}}}
\newcommand{\zket}[1]{\ensuremath{\ket*{#1^{(z)}}}}
\newcommand{\xbra}[1]{\ensuremath{\bra*{#1^{(x)}}}}
\newcommand{\zbra}[1]{\ensuremath{\bra*{#1^{(z)}}}}
\newcommand{\zxbraket}[2]{\ensuremath{\braket*{#1^{(z)}}{#2^{(x)}}}}
\newcommand{\xzbraket}[2]{\ensuremath{\braket*{#1^{(x)}}{#2^{(z)}}}}
\begin{document}

\title{Kirkwood-Dirac Nonpositivity is a Necessary Resource for Quantum Computing}

\author{Jonathan J. Thio}
\affiliation{Cavendish Lab., Department of Physics, Univ. of Cambridge, Cambridge CB3 0HE, UK}

\author{Songqinghao Yang}
\affiliation{Cavendish Lab., Department of Physics, Univ. of Cambridge, Cambridge CB3 0HE, UK}
\affiliation{Hitachi Cambridge Lab., J.J. Thomson Avenue, Cambridge CB3 0HE, UK}

\author{Stephan De Bièvre}
\affiliation{Univ. Lille, CNRS, Inria, UMR 8524, Laboratoire Paul Painlevé, F-59000 Lille, France}

\author{Crispin H. W. Barnes}
\affiliation{Cavendish Lab., Department of Physics, Univ. of Cambridge, Cambridge CB3 0HE, UK}

\author{David R. M. Arvidsson-Shukur}
\affiliation{Hitachi Cambridge Lab., J.J. Thomson Avenue, Cambridge CB3 0HE, UK}

\date{\today}

\begin{abstract}

Classical computers can simulate models of quantum computation with restricted input states.  The identification of such states can sharpen the boundary between quantum and classical computations. 
Previous works describe simulable states of odd-dimensional systems. Here, we further our understanding of systems of qubits. We do so by casting a real-quantum-bit model of computation in terms of a Kirkwood-Dirac (KD) quasiprobability distribution. Algorithms, throughout which this distribution is a proper (positive) probability distribution can be simulated efficiently on a classical computer. We leverage recent results on the geometry of the set of KD-positive states to construct previously unknown classically-simulable (bound) states. Finally, we show that KD nonpositivity is a resource monotone for quantum computation, establishing KD nonpositivity as a necessary resource for computational quantum advantage. 
\end{abstract}

\maketitle

\textit{Introduction:---}Current efforts in quantum-computing research are motivated by the expectation that quantum computers' power will exceed that of their classical counterparts. However, identifying exactly the source of quantum advantage is notoriously difficult. A deepened knowledge on this topic is crucial to enable the full power of quantum computing. 

One may investigate the source of quantum advantage by comparing a universal model of quantum computation with a suitably restricted model. A widely used universal model of quantum computation is the quantum-computation-by-state-injection model \cite{Bravyi2005}, consisting of arbitrary single-qubit state preparations, application of Clifford gates, and computational-basis measurements. One can restrict this model by reducing its set of input states. If the restricted model admits efficient classical simulation, then (if quantum advantages do indeed exist), the excluded states are necessary for quantum speedup. 

This approach is exemplified by the Gottesman-Knill (GK) theorem \cite{Aaronson2004}, which shows that the GK model of computation---a model restricted to stabilizer-state preparations (defined below), Clifford gates, and computational-basis measurements---can be simulated efficiently on classical hardware. Consequently, access to \textit{magic} states---states that cannot be written as mixtures of stabilizer states---is necessary for quantum advantage. This suggests that magic states are the key resource that enables quantum advantage in the quantum-computation-by-state-injection model. In support of this conclusion, it has been proven that there exist \textit{distillable} \cite{Bravyi2005, Reichardt2005} magic states that, when appended to the GK model, allow for efficient reproduction of any quantum computation by state injection. However, not all magic states are distillable. Within the GK model, there exist \textit{bound} magic states---magic states that still yield a classically simulable model when combined with the GK model \cite{Campbell2010, Veitch2012}. Characterizing the geometry of bound magic states pushes the reach of classical simulation algorithms, allowing us to sharpen the boundary between classical and quantum computation. 

\begin{figure}[t]
    \centering
        \begin{tikzpicture}[xscale=1.6, yscale=0.7, every node/.style={transform shape=false}]
        \draw (0,0) circle (2cm);
    
        \coordinate[label = below:\small{$\ket{c_1}$}] (a1) at (240:2);
        \coordinate[label = above :\small{$\ket{c_3}$}] (a2) at (90:2);
        \coordinate[label =  left:\small{$\ket{s}$}] (b1) at (170:2);
        \coordinate[label =  right:\small{$\ket{c_2}$}] (b2) at (340:2);
    
        \filldraw[fill=green!40!cyan!20, draw=black, dashed, thick] (a1) -- (b1) -- (a2);    

        \filldraw[red!40!yellow!20, opacity=1, draw=black, dotted, thick] 
          (a1)
          .. controls (0,-2) and (1,-1.5) .. (b2)  % curved path
          -- (b2) -- (a1);  

        \filldraw[red!40!yellow!20, opacity=1, draw=black, dotted, thick] 
          (a2)
          .. controls (0,2) and (1.5,1.5) .. (b2)  % curved path
          -- (b2) -- (a2);  
    
        \filldraw[fill=cyan!40!blue!30, draw=black,thick] (a1) -- (a2) -- (b2) -- (a1);

        \draw[dotted, thick] (a1).. controls  (-1,0) .. (a2);
    
        \node at (a1)[circle,fill,inner sep=1pt]{};
        \node at (a2)[circle,fill,inner sep=1pt]{};
        \node at (b1)[circle,fill,inner sep=1pt]{};
        \node at (b2)[circle,fill,inner sep=1pt]{};
        \node at (0,0)[circle,fill,inner sep=1pt]{};
        \draw[-stealth, thick] (0,0) -- (274:2);
        \node at (0.3,0){\small $I/d$};
        \node at (-0.1,-1) {\small $F$};
    \end{tikzpicture}
    \caption{\textbf{Schematic Overview of the Geometry of Simulable States.} The disc represents the set of all 2-qubit states, with pure states on the boundary. The states $\ket{c_j}$ and $\ket{s}$ are CSS and non-CSS stabilizer states, respectively. The area bounded by the solid triangle defines the CSS polytope. Further stabilizer mixtures are found in the area bounded by the dashed lines. The shape enclosed by the dotted line is the set of $\G$-KD-positive states. The states highlighted in yellow are $\G$-KD positive but lie outside the stabilizer polytope: they are bound magic states for the DGBR model.}
    \label{fig:geometryofstates}
\end{figure}
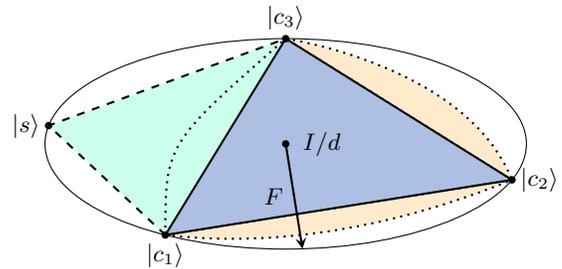

The main tools for identifying bound magic states are quasiprobability distributions---mathematical objects resembling probability distributions, but which may assume values outside the interval $[0,1]$. For certain \textit{positive} states, however, the quasiprobability distribution lies entirely within $[0,1]$ and thus is a proper probability distribution. Many algorithms enable efficient classical simulations of different models of quantum computation if the input state is positive with respect to some specific quasiprobability distributions \cite{Veitch2012, Mari2012, Raussendorf2017, Park2024, Raussendorf2020, Okay2021, Zurel2020, Zurel2024, Zurel2025, Zurel2025simulationquantumcomputationmagic, zurel2024efficientclassicalsimulationquantum, Pashayan2015}. Consequently, positive states can be identified as simulable states. A positive state that is not a stabilizer mixture must be a bound magic state. 

The most prominent example of a quasiprobability-based simulation algorithm uses Gross' Wigner function \cite{Gross2006, Veitch2012}. This quasiprobability distribution has a rich structure, leading to a broadly applicable simulation algorithm. Unfortunately, Gross' Wigner function is only defined for systems of odd Hilbert-space dimensionality: the construction does not apply to qubits---the most widely used components in modern quantum-information processing. This limitation was subsequently addressed in Refs. \cite{Delfosse2015, zurel2024efficientclassicalsimulationquantum,Zurel2020,Zurel2024,Zurel2025,Zurel2025simulationquantumcomputationmagic}. In particular, Ref. \cite{Delfosse2015} introduced a quasiprobability distribution tailored to a computational model defined on \textit{rebits} (qubits represented by real density matrices) \cite{rudolph2002}. We refer to this model as the Delfosse–Guerin–Bian–Raussendorf (DGBR) model; we refer to the associated quasiprobability representation as the DGBR distribution. The DGBR distribution leads to an efficient classical simulation algorithm for circuits with input states with positive DGBR distributions \cite{Delfosse2015}.

In this Article, we further our knowledge of classical simulability of quantum computers through the lens of Kirkwood-Dirac (KD) distributions---a trending family of quasiprobability distributions \cite{Kirkwood33,Dirac45,ArvidssonShukur2024}. We begin by generalizing the DGBR distribution to a complex-valued distribution, which we identify as a KD distribution. We show that this generalization preserves all of the properties relevant to simulation, implying that our KD distribution's positive input states are efficiently classically simulable in the DGBR model. This allows us to use the powerful toolkit developed for the characterization of the set of KD-positive states to study classical simulability \cite{Langrenez2024, Langrenez2025Convex, debièvre2025}. In doing so, we characterize the geometry of the KD-positive states (see Fig. \ref{fig:geometryofstates}) and discover a substantial volume of previously unknown bound magic states for the DGBR model. Compared to the GK theorem, we extend the volume of simulable $2$-qubit states by $15$ \%. Finally, we define the KD mana, a measure of the total nonpositivity of our KD distribution. The KD mana, we show, is an additive monotone for the resource theory of rebit quantum computation. Thus, we establish KD nonpositivity as a necessary resource for quantum-computational advantage. We use this result to lower bound the efficiency of any (including magic) distillation protocol on rebits.

\textit{Preliminaries:---}Throughout this paper, we consider a system of $n$ qubits with states in the Hilbert space $\HilbertSpace = (\mathbb{C}^{2})^{\otimes n}$ of dimensionality $d = 2^n$, and denote the set of density matrices on $\HilbertSpace$ by $\DensityMatrices$. We say that a density matrix $\rho\in\DensityMatrices$ is real, and thus a rebit state, if and only if its elements are real in the computational basis. We will index elements of $d\times d$ matrices by binary vectors $\ubold,\vbold \in \mathbb{Z}_2^n$. Furthermore, we let $Z_j$ denote Pauli-$Z$ on the $j$th qubit and identity everywhere else. We define $X_k$ similarly. We denote the real Pauli string associated with $\ubold,\vbold$ by
\begin{equation}\label{eq:real_Pauli_defn}
    \Pauli_{\ubold,\vbold} = \prod_{j=1}^nZ_j^{v_j} X_j^{u_j}.
\end{equation}

In this work, stabilizer states, and in particular the Calderbank-Shor-Steane (CSS) stabilizer states \cite{Calderbank98, Steane1996,gottesman1997} will play a central role. Stabilizer states are unique $+1$ eigenstates of a maximal commuting subgroup of the Pauli group \cite{Nielsen2010}. The CSS stabilizer states are the stabilizer states that have a maximal stabilizing subgroup of the form $\langle S_Z\cup S_X\rangle$, where $S_Z$ and $S_X$ are stabilizing subgroups consisting only of elements of the form $\Pauli_{\z,\vbold}$ and $\Pauli_{\ubold,\z}$, respectively. Here, $\langle \dots\rangle$ denotes the group generated by the bracketed elements. The CSS states form the logical basis of the popular CSS error-correction codes \cite{Calderbank98, Steane1996,gottesman1997}. 

The convex mixtures of the stabilizer, rebit-stabilizer, and CSS states form polytopes in $\DensityMatrices$. (See Fig. \ref{fig:geometryofstates}.) We refer to these polytopes as the stabilizer, rebit-stabilizer, and CSS polytopes, and we refer to states inside these polytopes as stabilizer, rebit-stabilizer, and CSS mixtures, respectively. The stabilizer polytope contains the rebit-stabilizer polytope, and the rebit stabilizer polytope contains the CSS polytope. 

\textit{The DGBR Model and Distribution:---}We now consider an $n$-rebit model of computation of particular relevance to the implementation of error-corrected surface codes \cite{Raussendorf2007}. Let $\Hadamard$ denote the Hadamard gate and let $\CX_{ct}$ denote a CNOT gate with qubit $c$ as control and qubit $t$ as target.
\begin{definition}[DGBR Model]
    The DGBR model allows for the following operations:
    \begin{itemize}
        \item Initialization of qubits in CSS states.
        \item Application of $\Hadamard^{\otimes n}$, $\CX_{ct}$ or $\Pauli_{\ubold,\vbold}$ to the qubits.
        \item Measurement of any qubit in the computational basis.
    \end{itemize}
    Previous measurement outcomes may determine future operations adaptively.
\end{definition}

Via the GK theorem, the DGBR model is efficiently simulable on a classical computer. Like the quantum-computation-by-state-injection model, the DGBR model can efficiently reproduce any other gate-based quantum computation if also given access to arbitrary real input states $\inputstate$. However, there exists a broad class of input states which still permit classical simulability. What states are efficiently simulable is studied through the DGBR distribution $\DGBR: \DensityMatrices \rightarrow \mathbb{R}^{d\times d}$ \cite{Delfosse2015}. The DGBR distribution is defined as
\begin{equation}\label{eq:DGBRinTermsOfA}
    \DGBR_{\ubold,\vbold}(\rho) = \Tr(A_{\ubold,\vbold}\rho), 
\end{equation}
where the phase-space point operators of the DGBR distribution are
\begin{equation}\label{eq:DGBRphasespacepointoperators}
    A_{\ubold,\vbold} = P_{\ubold,\vbold} A_{\z,\z} P_{\ubold,\vbold}^{\dagger} \textrm{   and   } A_{\z,\z} =  \sum_{\ubold\cdot\vbold  = 0 \hspace{-0.2cm}\mod 2} \frac{P_{\ubold,\vbold}}{2^{2n}}. 
\end{equation}

The DGBR distribution possesses several properties important to simulation. Specifically, its positive real pure states coincide with the CSS states; it transforms covariantly under the unitary operations in the DGBR model; and it satisfies a natural product rule with respect to tensor products. Using these properties, Ref. \cite{Delfosse2015} showed that as long as $\inputstate$ is DGBR positive, then an efficient classical algorithm simulates the DGBR model even when given access to $\inputstate$ \cite{Delfosse2015}.

\textit{The $\G$-Kirkwood-Dirac Distribution:---}We now generalize the DGBR distribution to a KD distribution. The KD distributions \cite{Kirkwood33, Dirac45,ArvidssonShukur2024,Lostaglio2023kirkwooddirac} are a family of quasiprobability distributions that have recently been shown to have many applications in various fields including quantum metrology \cite{Arvidsson-Shukur2020, lupu2022negative, Jenne22}, quantum foundations \cite{Aharonov88,Dressel14, Pang14, Pusey14, Dressel15, Pang15, Thio2024, Langrenez2025Convex}, quantum thermodynamics \cite{YungHalp17, Lostaglio18, Lostaglio20,Levy20,lostaglio2023kirkwood, Upadhyaya2024}, and information scrambling \cite{YungHalp17,YungHalp18,Razieh19,YungHalp19}. The KD distribution $\KD: \DensityMatrices \rightarrow \mathbb{C}^{d\times d}$ is defined with respect to two orthonormal bases $\Abasis$ and $\Bbasis$ for $\HilbertSpace$. 

We choose $\Abasis$ and $\Bbasis$ to be tensor products of the eigenstates of the Pauli-$Z$ and -$X$ operators, respectively. We describe the resulting KD distribution as a special case of the general KD distribution for a finite Abelian group $G$, constructed in Ref. \cite{debièvre2025}. This group structure is crucial for our results. Since we consider $n$ qubits, a natural choice of $G$ is $\G$. We denote the group's elements by $\g = (g_1, g_2, \dots, g_n)$, where $g_j \in \{0,1\}$, and the group operation by $+$. We fix $\Abasis$ to be the computational basis $\{\zket{\g}\}_{\g\in G}$. Further, we consider the character group of $G$ denoted by $\hat{G}$, for which we also denote the group operation by $+$. $G$ and $\hat{G}$ play roles similar to position and momentum spaces of continuous-variable systems. As momentum is both a vector and a translation operator, a character $\c \in \hat{G}$  is both a vector $\c = (\chi_1, \chi_2,\dots,\chi_n)$, where $\chi_k \in \{ 0,1\}$, and a function on $G$: $\c(\g) = \prod_{j=1}^n (-1)^{\chi_j{g_j}}$. Using the characters, we define the states
\begin{equation}
    \xket{\c} \coloneqq \frac{1}{|G|^{1/2}}\sum_{\g\in G}\c(\g)\zket{\g},
\end{equation}
where $|G|$ denotes the number of elements in the group. Further, we set our second orthonormal basis to be $\Bbasis = \{\xket{\c}\}_{\c \in \hat{G}}$.

The KD distribution of a state $\rho$ with respect to these choices for $\Abasis$ and $\Bbasis$ is 
\begin{equation}\label{eq:Z/2Z_KD_defn}
    \KD_{\g,\c}(\rho) = \Tr(B_{\g,\c}\rho),
\end{equation}
where 
\begin{equation}
    B_{\g,\c} = \xket{\c} \xzbraket{\c}{\g}{\zbra{\g}}
\end{equation}
are the phase-space point operators of the $\G$-KD distribution. Since $\zxbraket{\g}{\c} \neq 0$, this KD distribution is informationally complete: $\KD_{\g,\c}(\rho) $ uniquely defines $\rho$ \cite{ArvidssonShukur2024}. We also consider the KD symbol \cite{ArvidssonShukur2024} of an arbitrary generalized measurement
\begin{equation}\label{eq:KD_character}
    \tilde{\KD}_{\g,\c}(\POVM_l) = \frac{\Tr(B_{\g,\c}\POVM_l)}{|{\zxbraket{\g}{\c}}|^2}   .
\end{equation}
Here, $\{ \POVM_l \}_l$ is the set of positive semidefinite matrices representing a generalized measurement \cite{Nielsen2010}. The KD distribution of a time-evolving state, together with the KD symbol of a generalized measurement, give a complete description of an experiment. The outcome probabilities are calculated using the overlap formula \cite{ArvidssonShukur2024, debièvre2025}:
\begin{equation}\label{eq:overlap_formula}
    \Tr(\POVM_l \rho) = \sum_{\g,\c} \tilde{\KD}_{\g,\c}(\POVM_l) \KD_{\g,\c}(\rho).
\end{equation}

We now show that the $\G$-KD distribution is indeed a generalization of the DGBR distribution and enjoys similar simulation properties.
\begin{Theorem}\label{thm:properties}
    The $\G$-KD distribution has the following properties:
    \begin{enumerate}
        \item (DGBR Connection) The real part of the $\G$-KD distribution \cite{Endnote1} coincides with the DGBR distribution: 
        \begin{equation}\label{eq:DGBRconnection}
        \Re\left[\KD_{\g,\c}(\rho)\right] = W_{\g,\c}(\rho).
        \end{equation}
        \item (Hudson's Theorem) A pure state is KD positive if and only if it is a CSS state.
                \item (Covariance) The following covariance relationships hold:
        \begin{align}\label{eq:covariance_of_paulis}
            \KD_{\g_0,\c_0}(\Pauli_{\g,\c}\rho \Pauli_{\g,\c}^{\dagger}) &= \KD_{\g_0 + \g,\c_0 + \c}(\rho),\\ \label{eq:covariance_of_Hadamards}
            \KD_{\g,\c}(\Hadamard^{\otimes n}\rho \Hadamard^{\otimes n\dagger}) &= \overline{\KD_{\c,\g}(\rho)}, \\\label{eq:covariance_of_CNOTs}
            \KD_{\g,\c}(\CX_{ct}\rho \CX_{ct}^\dagger) &= \KD_{A_{ct}\g, B_{ct}\c} (\rho),
        \end{align}
        where $(A_{ct})_{jk} \coloneqq \delta_{jk} + \delta_{tj}\delta_{ck}$, $(B_{ct})_{jk} \coloneqq \delta_{jk} + \delta_{cj}\delta_{tk}$ and the overbar denotes complex conjugation.
        \item (Products) Consider a tensor product factorization of $\HilbertSpace = \HilbertSpace_1\otimes\HilbertSpace_2$, and let $\rho_1 \in \mathcal{D}(\mathcal{H}_1)$ and $\rho_2 \in \mathcal{D}(\mathcal{H}_2)$. Then,
        \begin{equation}
            \KD_{(\g,\g'),(\c,\c')}(\rho_1\otimes\rho_2)= \KD_{\g,\c}(\rho_1)\KD_{\g',\c'}(\rho_2).
        \end{equation}
    \end{enumerate}
\end{Theorem}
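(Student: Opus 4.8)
The plan is to treat the four properties in the order 3, 1, 4, 2, reducing each statement about $\KD$ to a Heisenberg-picture identity for the phase-space point operators $B_{\g,\c}=\xket{\c}\,\xzbraket{\c}{\g}\,\zbra{\g}$ and then verifying that identity from the elementary action of the relevant unitaries on the bases $\{\zket{\g}\}$, $\{\xket{\c}\}$ together with $\G$-character orthogonality. For Property~3, since $\KD_{\g_0,\c_0}(U\rho U^{\dagger})=\Tr(U^{\dagger}B_{\g_0,\c_0}U\rho)$, the three relations are equivalent to $\Pauli_{\g,\c}^{\dagger}B_{\g_0,\c_0}\Pauli_{\g,\c}=B_{\g_0+\g,\c_0+\c}$, to $\Hadamard^{\otimes n}B_{\g,\c}\Hadamard^{\otimes n}=B_{\c,\g}^{\dagger}$ (using $\overline{\KD_{\c,\g}(\rho)}=\Tr(B_{\c,\g}^{\dagger}\rho)$ for Hermitian $\rho$), and to $\CX_{ct}B_{\g,\c}\CX_{ct}=B_{A_{ct}\g,\,B_{ct}\c}$. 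I would substitute the standard actions $\Pauli_{\g,\c}\zket{\g_0}=(-1)^{\c\cdot(\g_0+\g)}\zket{\g_0+\g}$, $\Pauli_{\g,\c}\xket{\c_0}=(-1)^{\g\cdot\c_0}\xket{\c_0+\c}$, $\Hadamard^{\otimes n}\zket{\g}=\xket{\g}$, $\Hadamard^{\otimes n}\xket{\c}=\zket{\c}$, $\CX_{ct}\zket{\g}=\zket{A_{ct}\g}$, $\CX_{ct}\xket{\c}=\xket{B_{ct}\c}$ into the definition of $B_{\g,\c}$ and collect the scalar prefactors. The Hadamard case then closes at once using $(-1)^{\c\cdot\g}=(-1)^{\g\cdot\c}$, and the CNOT case reduces to the linear-algebra identity $B_{ct}^{\mathsf T}A_{ct}\equiv I\pmod 2$ (i.e.\ $A_{ct}=B_{ct}^{\mathsf T}$), immediate from the stated definitions of $A_{ct}$, $B_{ct}$. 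The Pauli case is the one requiring care, since conjugation produces a phase from the action on $\zket{\g_0}$, a phase from the action on $\xket{\c_0}$, and a further phase from the mismatch between $\xzbraket{\c_0}{\g_0}$ and $\xzbraket{(\c_0+\c)}{(\g_0+\g)}$; checking that these cancel is the main bookkeeping step in Property~3, though it is routine.

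For Property~1 I would actually prove the stronger operator statement $\Re[B_{\g,\c}]=\tfrac12(B_{\g,\c}+B_{\g,\c}^{\dagger})=A_{\g,\c}$, which gives Eq.~\eqref{eq:DGBRconnection} because $\Re[\Tr(B\rho)]=\Tr\!\big(\tfrac12(B+B^{\dagger})\rho\big)$ for Hermitian $\rho$. A short computation gives $B_{\g,\c}=\Pauli_{\g,\c}B_{\z,\z}\Pauli_{\g,\c}^{\dagger}$, and since every $\Pauli_{\g,\c}$ is a real matrix, $\Re[B_{\g,\c}]=\Pauli_{\g,\c}\Re[B_{\z,\z}]\Pauli_{\g,\c}^{\dagger}$; by definition $A_{\g,\c}=\Pauli_{\g,\c}A_{\z,\z}\Pauli_{\g,\c}^{\dagger}$, so it is enough to match the two at the origin. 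There $B_{\z,\z}=|G|^{-1}\sum_{\g}\zket{\g}\zbra{\z}$, so $\tfrac12(B_{\z,\z}+B_{\z,\z}^{\dagger})$ has computational-basis entry $|G|^{-1}$ at $(\z,\z)$, entry $(2|G|)^{-1}$ on the remainder of row $\z$ and column $\z$, and $0$ elsewhere; evaluating $A_{\z,\z}$ in the computational basis via the character sum $\sum_{\c:\,\c\cdot\h=0}(-1)^{\c\cdot\g}$ (equal to $|G|/2$ when $\g\in\{\z,\h\}$ and $0$ otherwise) reproduces the very same matrix.

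Property~4 follows from the factorization $B_{(\g,\g'),(\c,\c')}=B_{\g,\c}\otimes B_{\g',\c'}$, since then $\Tr\!\big[(B_{\g,\c}\otimes B_{\g',\c'})(\rho_1\otimes\rho_2)\big]$ splits into the product of the two single-system traces. The factorization holds because $\G$ is a direct product along the qubit partition underlying $\HilbertSpace=\HilbertSpace_1\otimes\HilbertSpace_2$: the computational basis tensor-factorizes by construction, a character $(\c,\c')$ of the product group is $\c\otimes\c'$ so that $\xket{(\c,\c')}=\xket{\c}\otimes\xket{\c'}$, and the overlap $\xzbraket{(\c,\c')}{(\g,\g')}$ factors accordingly; this step is pure bookkeeping.

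For Property~2, writing $|\psi\rangle=\sum_{\g}\psi_{\g}\zket{\g}$ with Fourier coefficients $\hat\psi_{\c}=|G|^{-1/2}\sum_{\g}\c(\g)\psi_{\g}$ gives $\KD_{\g,\c}(|\psi\rangle\!\langle\psi|)=|G|^{-1/2}\,\overline{\hat\psi_{\c}}\,\c(\g)\,\psi_{\g}$, so KD positivity demands this be nonnegative for every $\g,\c$. On the supports $S=\{\g:\psi_{\g}\neq 0\}$ and $\hat S=\{\c:\hat\psi_{\c}\neq 0\}$ this fixes a single global phase (so $\psi_\g$ may be taken real) and yields the rank-one sign condition $\operatorname{sgn}(\psi_{\g})\operatorname{sgn}(\hat\psi_{\c})=\c(\g)$ on $S\times\hat S$; from it I would deduce that $|\psi_{\g}|$ is constant on $S$, $|\hat\psi_{\c}|$ constant on $\hat S$ (whence $|S|\,|\hat S|=|G|$ by Parseval), and $\operatorname{sgn}(\psi_{\g})$ equal to a character restricted to $S$. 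The vanishing of $\hat\psi_{\c}$ off $\hat S$ then says the Fourier support of the indicator $\mathbf 1_{S}$ has size exactly $|G|/|S|$, so by the equality case of the finite-abelian-group uncertainty principle $S$ is a coset of a subgroup $V\le G$ and $\hat S$ a coset of $V^{\perp}$; the resulting $|\psi\rangle$ --- a uniform superposition over a coset of $V$ with affine-character signs --- is precisely a CSS state, the $Z$-type generators fixing the coset $S$, the $X$-type generators enforcing $V$-invariance with the prescribed signs, and maximality of the stabilizer forcing the $X$-support to be $V$. The converse direction is immediate by plugging a CSS state into the displayed formula. I expect this property to be the principal obstacle, both for the uncertainty-principle equality case and for the bookkeeping that matches the derived coset/character data to the stabilizer-group definition of CSS states; alternatively, one can import the structural description of KD-positive pure states for group bases from Ref.~\cite{debièvre2025} and only carry out the CSS identification here.
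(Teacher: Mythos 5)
Your proposal is correct, and for Properties 1, 3 and 4 it follows essentially the paper's own route: the covariance relations are reduced to Heisenberg-picture identities for the $B_{\g,\c}$ and verified from the action of $\Pauli_{\g,\c}$, $\Hadamard^{\otimes n}$ and $\CX_{ct}$ on the two bases (the paper outsources the Pauli case to Eq.~(29) of Ref.~\cite{debièvre2025}, which you instead verify directly; the phase bookkeeping you flag does close, since the conjugation phases exactly cancel against the ratio $\xzbraket{\c_0}{\g_0}/\xzbraket{(\c_0+\c)}{(\g_0+\g)}$); the DGBR connection is proved via the operator identity $A_{\g,\c}=\tfrac12(B_{\g,\c}+B_{\g,\c}^{\dagger})$ reduced to the origin by Pauli covariance exactly as in Supplemental Note~I (you compute the matrix elements of $A_{\z,\z}$ via the character sum where the paper evaluates $\sum_{\g,\c}\Pauli_{\g,\c}$; both yield the same matrix supported on row and column $\z$); and the product rule is immediate from $B_{(\g,\g'),(\c,\c')}=B_{\g,\c}\otimes B_{\g',\c'}$. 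The genuine divergence is Property 2. The paper imports the classification of pure KD-positive states from Ref.~\cite{debièvre2025} --- every such state is $\Pauli_{\g,\c}\ket{H}$ for a subgroup $H$ --- and spends all its effort on the stabilizer-theoretic identification with CSS states. You instead propose to re-derive the classification from scratch via the sign condition $\operatorname{sgn}(\psi_{\g})\operatorname{sgn}(\hat\psi_{\c})=\c(\g)$, constancy of the moduli, $|S|\,|\hat S|=|G|$ by Parseval, and the equality case of the finite-abelian (Donoho--Stark) uncertainty principle to force $S$ to be a subgroup coset. This is a legitimate and more self-contained route, and your intermediate deductions are correct, but note two costs: the equality case of the uncertainty principle is itself a theorem you would have to prove or cite, so you are trading one imported result for another; and your claim that the converse (every CSS state has this form) is ``immediate'' understates the work --- one must show that a maximal stabilizer generated by $Z$-type and $X$-type Paulis forces $H$ to be a subgroup, $K=H^{\perp}$, and the sign functions to be linear, which is precisely the argument the paper carries out in Supplemental Note~II. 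Neither point is a mathematical gap, but both are places where the writeup still needs real content.
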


To prove Theorem \ref{thm:properties}.1, we first show that the phase-space point operators of the DGBR and $\G$-KD distributions are connected through
\begin{equation}\label{eq:phase_space_point_operator_connection}
	A_{\g,\c} = \frac{1}{2} (B_{\g,\c} + B_{\g,\c}^{\dagger}).
\end{equation}
Substituting Eq. \eqref{eq:phase_space_point_operator_connection} into Eq. \eqref{eq:DGBRinTermsOfA} then yields the desired result. We give the details of this proof in Supplemental Note~I.

The original Hudson's theorem \cite{Hudson1974} determines all pure positive states of the continuous-variable Wigner function \cite{Wigner1932}. The proof of our version of Hudson's Theorem for the $\G$-KD distribution builds on a recent result \cite{debièvre2025}:  all pure KD-positive states of KD distributions based on a finite Abelian group $G$ are of the form 
\begin{equation}\label{eq:KD_positive_pure_states}
    \ket{H; \g, \c}\coloneqq \frac{1}{|H|^{1/2}}\Pauli_{\g,\c}\sum_{\h\in H}\zket{\h}.
\end{equation}
Here, $H$ is any subgroup of $G$. Proving Theorem \ref{thm:properties}.2 reduces to showing that states of the form in Eq. \eqref{eq:KD_positive_pure_states} are CSS states and that all CSS states are of this form. In Supplemental Note II, we confirm these two statements by computing the maximal stabilizing subgroup of the state in Eq. \eqref{eq:KD_positive_pure_states}. 

Covariance Relationship \eqref{eq:covariance_of_paulis} follows directly from Eq. (29) in Ref. \cite{debièvre2025}. One proves Covariance Relationships \eqref{eq:covariance_of_Hadamards} and \eqref{eq:covariance_of_CNOTs} by evaluating the left-hand sides and using the fact that $\Hadamard$ and $\CX$ map eigenstates of real Pauli operators to eigenstates of real Pauli operators. For completeness, we give the details of this calculation in Supplemental Note~III.

Last, Theorem \ref{thm:properties}.4 is immediate.

\textit{Efficient Classical Simulation and Bound Magic States:---}We now identify KD-positive states that are not in the stabilizer polytope, and therefore bound magic states for the DGBR model (Fig. \ref{fig:geometryofstates}). A trivial yet important corollary of Theorem \ref{thm:properties}.1 is that any KD-positive state must also be DGBR-positive. Since DGBR-positive input states are efficiently classically simulable, a DGBR model is efficiently simulable even if given access to an arbitrary KD-positive input state $\inputstate$. For completeness, we provide a KD rephrasing of the DGBR simulation algorithm in Supplemental Note~IV.

We now turn to the identification of bound magic states for the DGBR model. We do so using the extensive tools developed to characterize sets of KD-positive states \cite{Langrenez2024, langrenez2024setkirkwooddiracpositivestates, debièvre2025}. Theorem 1.1.i of Ref. \cite{Langrenez2024} shows that the only single-qubit KD-positive states are the mixtures of $\zket{g}\zbra{g}$ and $\xket{\chi}\xbra{\chi}$. Since these states are all the single-qubit rebit-stabilizer states, the DGBR simulation algorithm can simulate the same set of DGBR circuits with real single-qubit input states as the GK algorithm. Thus, one cannot use our KD distribution to identify potential single-qubit bound states within the DGBR model. 

Nevertheless, recent work shows that the set of KD-positive states has a richer structure in higher dimensions: there exist \textit{exotic} KD-positive mixed states that are not mixtures of KD-positive pure states \cite{Langrenez2024}. This applies to the $\G$-KD distribution \cite{debièvre2025}. We now show that there are KD-positive mixed states on $2$ qubits that are not mixtures of stabilizer states. Using the techniques from Refs. \cite{debièvre2025, Langrenez2024}, we construct a state $\rho_{\lambda}$ along the direction of some operator $F$: 
\begin{equation}
    \rho_{\lambda} = \frac{1}{4}I + \lambda F.
\end{equation}
Informed by the location of the bound states in Fig. \ref{fig:geometryofstates}, we choose $F$ normal to a facet of both the CSS and rebit-stabilizer polytopes: 
\begin{equation}
    F = 
    \begin{pmatrix}
        1         &  0          &  1            &  1         \\
        0         &  1          & -1            & -1         \\
        1         & -1          & -1            & -2         \\
        1         & -1          & -2            & -1.
\end{pmatrix}.
\end{equation}
Thus, the inner side of the facet can be described by the inequality $\Tr (\rho F) \leq 1$. In Supplemental Note~V, we show that $\rho_\lambda$ is a KD-positive state if $\lambda\in [0, 1/(4+8\sqrt{2})]$. Furthermore, the state $\rho_\lambda$ lies on the outer side of the facet for $\lambda > 1/20$ since then $\Tr(F\rho_\lambda) > 1$. Thus, for $\lambda\in ]1/20, 1/(4+8\sqrt{2})]$, $\rho_\lambda$ is a 2-qubit bound magic state for the DGBR model. These are not the only bound magic states on $2$ qubits: similar bound magic states exist above each of the 20 facets of the CSS polytope that coincides with a facet of the rebit-stabilizer polytope. Using cdd \cite{Fukuda1995}, a numerical package for analyzing polytopes, we show this in Supplemental Note~V.

The existence of $2$-qubit bound magic states implies that bound magic states must also exist for any number of qubits greater than $2$. This follows from the fact that partial traces of stabilizer mixtures generate other stabilizer mixtures \cite{Audenaert2005}. Consider the state $\sigma^{\otimes (n-2)}\otimes\rho_\lambda$, where $\sigma$ is some KD-positive single-qubit state and $\lambda\in ]1/20, 1/(4+8\sqrt{2})]$. $\sigma^{\otimes (n-2)}\otimes\rho_\lambda$ is KD positive by Theorem \ref{thm:properties}.4. Moreover, $\sigma^{\otimes (n-2)}\otimes\rho_\lambda$ cannot be a stabilizer mixture; if it were a stabilizer mixture, then $\sigma^{\otimes (n-2)}\otimes\rho_\lambda$ after the partial trace over the first $n-2$ qubits must be a stabilizer mixture, and thus $\rho_\lambda$ must be a stabilizer mixture, which is not the case. Hence bound states exist for any number of qubits greater than two. Since the convex hull of the KD-positive states is the CSS polytope, and the CSS polytope is a subset of the stabilizer polytope, this also implies that exotic states exist for any number of qubits greater than $2$.

To quantify the extent to which the DGBR simulation algorithm differs from the GK simulation algorithm on DGBR circuits, we numerically examine the structure of the 2-qubit state space $\mathcal{D}((\mathbb{C}^2)^{\otimes 2})$. Specifically, we compare the (Hilbert-Schmidt) volumes of the regions that the simulation methods can efficiently simulate, as summarized in Table~\ref{volume}. Approximately $0.69$ \% of the states are KD-positive bound states. This increases the known set of simulable states by $15$ \%. The sampling methodology used in our analysis is detailed in Supplemental Note~VI. 

\begin{table}[hbt!]
    \centering
    \begin{tabular}{@{}lc@{}lc@{}}
        \toprule
        \textbf{Category} & \textbf{Simulable by  }  & \textbf{ Proportion } \\
        \midrule
        Stabilizer \& KD positive   & GK\&DGBR    & $1.5614(5)$ \%  \\
        Stabilizer \& KD nonpositive   & GK        & $2.9753(5)$ \%  \\
        Magic \& KD positive (bound)& DGBR        & $0.6868(3)$ \%   \\
        Magic \& KD nonpositive        & Neither      & $94.7766(6)$ \%  \\
        \bottomrule
    \end{tabular}
    \caption{\textbf{Estimated Relative Volumes of the Two-Qubit Quantum States.} The relative volumes were estimated by sampling $1$ billion random quantum states over the space $\mathcal{D}((\mathbb{R}^2)^{\otimes 2})$.}
\label{volume}
\end{table}

\textit{KD Nonpositivity as a Necessary Resource for Quantum Advantage:---}To quantify the resource behind quantum computational advantages, we now bound how much KD nonpositivity is required for arbitrary quantum computations. We construct a monotone for the resource theory of the DGBR model with arbitrary input states. In this resource theory, the free states are the CSS mixtures, and the free operations are those allowed by the DGBR model. We define the KD mana as follows:
\begin{equation}\label{eq:mana}
    \mana(\rho) \coloneqq \log \sum_{\g,\c}|\KD_{\g,\c}(\rho)| .
\end{equation}
\begin{Theorem}\label{thm:properties_of_mana} The KD mana satisfies the following properties:
    \begin{enumerate}
        \item (Faithfulness) $\mana(\rho) = 0$ if and only if $\rho$ is KD positive.
        \item (Additivity) For all density matrices $\rho$ and $\sigma$, 
        \begin{equation}
            \mana(\rho\otimes\sigma) = \mana(\rho) + \mana(\sigma) .
        \end{equation}
        \item (Monotone) $\mana$ is nonincreasing under the free operations and is thus a monotone.
        
    \end{enumerate}
\end{Theorem}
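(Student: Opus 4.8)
To prove Theorem~\ref{thm:properties_of_mana}, the plan is to exploit a single elementary fact: the $\G$-KD distribution is normalized, $\sum_{\g,\c}\KD_{\g,\c}(\rho)=\Tr(\rho)=1$, which holds because $\sum_{\g,\c}B_{\g,\c}=\sum_\c\xket{\c}\big(\sum_\g\xzbraket{\c}{\g}\zbra{\g}\big)=\sum_\c\xket{\c}\xbra{\c}=I$ by completeness of $\Abasis$. For \textbf{Part~1}, the triangle inequality gives $e^{\mana(\rho)}=\sum_{\g,\c}|\KD_{\g,\c}(\rho)|\ge\big|\sum_{\g,\c}\KD_{\g,\c}(\rho)\big|=1$, so $\mana(\rho)\ge 0$ always; moreover equality holds here iff equality holds in the triangle inequality, and since the $\KD_{\g,\c}(\rho)$ sum to the positive real number $1$, this forces each $\KD_{\g,\c}(\rho)$ to be a nonnegative real, i.e.\ $\rho$ is KD positive (the converse implication being then immediate). \textbf{Part~2} follows in one line from the product rule of Theorem~\ref{thm:properties}.4: $e^{\mana(\rho\otimes\sigma)}=\sum_{\g,\c,\g',\c'}|\KD_{\g,\c}(\rho)|\,|\KD_{\g',\c'}(\sigma)|=e^{\mana(\rho)}e^{\mana(\sigma)}$; take logarithms. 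The same linearity-and-triangle-inequality reasoning shows $\rho\mapsto e^{\mana(\rho)}$ is convex, so $e^{\mana(\sum_i p_i\rho_i)}\le\sum_i p_i e^{\mana(\rho_i)}$, which I use below.

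For \textbf{Part~3} I would go through the free operations one class at a time. (i)~\emph{Model unitaries.} By the covariance relations of Theorem~\ref{thm:properties}.3, conjugation by a real Pauli string $\Pauli_{\g,\c}$, by $\Hadamard^{\otimes n}$, or by $\CX_{ct}$ relabels the quasiprobability indices by a bijection of $G\times\hat G$ — respectively a translation, the index swap $(\g,\c)\mapsto(\c,\g)$ composed with complex conjugation, and $(\g,\c)\mapsto(A_{ct}\g,B_{ct}\c)$; the last is a bijection because $A_{ct}$ and $B_{ct}$ square to the identity over $\mathbb{Z}_2$ — and $|\cdot|$ is unchanged under relabeling and complex conjugation, so $\mana$ is invariant. (ii)~\emph{Appending a CSS state $\ket{c}$.} Hudson's theorem (Theorem~\ref{thm:properties}.2) makes $\ket{c}\bra{c}$ KD positive, so $\mana(\ket{c}\bra{c})=0$ by Part~1 and $\mana(\rho\otimes\ket{c}\bra{c})=\mana(\rho)$ by Part~2. (iii)~\emph{Discarding a qubit.} Inserting the single-qubit completeness $\sum_{g,\chi}B_{g,\chi}=I$ into the discarded tensor factor shows that the $\G$-KD distribution of the reduced state is obtained from that of $\rho$ by summing over the discarded pair of indices; the triangle inequality then gives that $\mana$ does not increase.

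(iv)~The crux is a \emph{computational-basis measurement on a qubit~$j$}. Let $\tilde\rho_b=\Pi_b^{(j)}\rho\,\Pi_b^{(j)}$ (with $b\in\{0,1\}$) be the unnormalized post-measurement states, $p_b=\Tr(\tilde\rho_b)$, $\rho_b=\tilde\rho_b/p_b$, and let $\bm{e}_j\in\G$ be the $j$-th standard basis vector. The identity I would establish is
\begin{equation}\label{eq:meas_KD}
\KD_{\g,\c}(\tilde\rho_b)=\begin{cases}\tfrac12\big(\KD_{\g,\c}(\rho)+\KD_{\g,\c+\bm{e}_j}(\rho)\big), & g_j=b,\\ 0, & g_j\ne b,\end{cases}
\end{equation}
obtained by inserting $B_{\g,\c}=\xket{\c}\xzbraket{\c}{\g}\zbra{\g}$ into $\Tr(\Pi_b^{(j)}B_{\g,\c}\Pi_b^{(j)}\rho)$ and using (a)~$\zbra{\g}\Pi_b^{(j)}$ equals $\zbra{\g}$ or $0$ according as $g_j=b$ or $g_j\ne b$, and (b)~$\Pi_b^{(j)}\xket{\c}=\tfrac12\big(\xket{\c}+(-1)^b\xket{(\c+\bm{e}_j)}\big)$, the latter from $\Pi_b^{(j)}=\tfrac12\big(I+(-1)^bZ_j\big)$ together with $Z_j\xket{\c}=\xket{(\c+\bm{e}_j)}$. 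Summing \eqref{eq:meas_KD} over $b$, then over $(\g,\c)$, applying the triangle inequality, and substituting $\c\mapsto\c+\bm{e}_j$ in one of the two resulting sums gives $\sum_b\sum_{\g,\c}|\KD_{\g,\c}(\tilde\rho_b)|\le\sum_{\g,\c}|\KD_{\g,\c}(\rho)|$, i.e.\ $\sum_b p_b\,e^{\mana(\rho_b)}\le e^{\mana(\rho)}$; concavity of the logarithm (Jensen's inequality) then yields the strong-monotonicity bound $\sum_b p_b\,\mana(\rho_b)\le\mana(\rho)$ (discarding the outcome, $\rho\mapsto\tilde\rho_0+\tilde\rho_1$, being the $b$-summed special case). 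Finally, for \emph{adaptive} compositions, since each free step — conditioned on the measurement record produced so far — does not increase the expected mana, composing them telescopes to show that the expected mana of the output never exceeds $\mana$ of the input; hence $\mana$ is a monotone.

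The step I expect to be the main obstacle is the identity~\eqref{eq:meas_KD}, i.e.\ working out precisely how a computational-basis projector transforms the $\G$-KD distribution. Once this and the elementary relation $Z_j\xket{\c}=\xket{(\c+\bm{e}_j)}$ are in hand, the averaging estimate and Jensen's inequality are routine, and the remainder of the theorem reduces to normalization of $\KD$, the product rule, the covariance relations of Theorem~\ref{thm:properties}, and the triangle inequality.
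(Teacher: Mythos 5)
Your proposal is correct, and its overall architecture (normalization plus triangle inequality for faithfulness, the product rule for additivity, and a case-by-case check of the free operations for monotonicity) matches the paper's Supplemental Note~VII. Two of your steps differ in a way worth noting. For faithfulness, you invoke the equality condition in the triangle inequality applied to $\sum_{\g,\c}\KD_{\g,\c}(\rho)=1$, which immediately forces every entry onto the nonnegative real axis; the paper instead argues the contrapositive by extracting an explicit $\delta>0$ from a nonpositive entry and bounding $\mana(\rho)\ge\log(1+\delta)$. Your route is cleaner and yields the useful byproduct $\mana(\rho)\ge 0$ for all states. For the measurement step, you work with the unnormalized post-measurement states $\tilde\rho_b=\Pi_b^{(j)}\rho\,\Pi_b^{(j)}$, derive the index-shift identity for each outcome, and then combine the outcome-summed triangle-inequality bound with Jensen's inequality to obtain the \emph{strong} monotonicity statement $\sum_b p_b\,\mana(\rho_b)\le\mana(\rho)$, which you then telescope through adaptive circuits. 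The paper's Lemma on the post-measurement KD distribution states the same index-shift formula but is looser about the normalization by $p_b$, and its monotonicity proof only establishes the non-selective bound $\sum_{\g,\c}|\KD_{\g,\c}(\rho')|\le\sum_{\g,\c}|\KD_{\g,\c}(\rho)|$ without explicitly addressing outcome-averaged mana or adaptivity. Your version is therefore somewhat stronger and better suited to the adaptive setting the DGBR model actually allows; the remaining parts (unitary covariance as index permutations, appending CSS states via faithfulness plus additivity, partial trace via summing indices and the triangle inequality) coincide with the paper's argument.
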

Theorem \ref{thm:properties_of_mana}.1 follows from the fact that $\sum_{\g,\c}|\KD_{\g,\c}(\rho)|=1$ if and only if $\rho$ is KD positive. The proof of Theorem \ref{thm:properties_of_mana}.2 follows from the product rule in Theorem \ref{thm:properties}.4. Proving Theorem \ref{thm:properties_of_mana}.3 amounts to checking that $\mana$ is nonincreasing under the free operations. We give explicit calculations in Supplemental Note~VII.

To determine how useful a given state is for quantum computation, one would like to determine if one could map it to another state using free operations. Distillation protocols \cite{Bravyi2005, Reichardt2005} do exactly that: they take a number of copies of an input state $\rho$ and map them to a resourceful target state $\sigma$. Using an argument analogous to that in Ref. \cite{Veitch2014resourcetheory}, one may use Theorem \ref{thm:properties_of_mana} to lower bound the number of input states required to create a given output state:
\begin{corollary}
    In the DGBR model, any distillation protocol mapping copies of an input state $\rho$ to a target state $\sigma$ will require at least $\mana(\sigma)/\mana(\rho)$ copies of $\rho$.
\end{corollary}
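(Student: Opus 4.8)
The plan is to obtain the bound as an immediate consequence of the three properties of the KD mana collected in Theorem~\ref{thm:properties_of_mana}, following the template of Ref.~\cite{Veitch2014resourcetheory}. Let a distillation protocol consume $N$ copies of the input state $\rho$ and deterministically output the target state $\sigma$. The state fed into the protocol is $\rho^{\otimes N}$, possibly together with an ancillary register prepared in some CSS state $\tau$; since $\tau$ is KD positive, Faithfulness (Theorem~\ref{thm:properties_of_mana}.1) gives $\mana(\tau)=0$, and Additivity (Theorem~\ref{thm:properties_of_mana}.2) then yields $\mana(\rho^{\otimes N}\otimes\tau)=N\,\mana(\rho)$.

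Next I would observe that the protocol, taken as a whole, is assembled entirely from operations permitted in the DGBR model---applications of $\Hadamard^{\otimes n}$, $\CX_{ct}$, and real Paulis $\Pauli_{\ubold,\vbold}$, computational-basis measurements, adaptive conditioning on earlier outcomes, and discarding of measured qubits. By the Monotone property (Theorem~\ref{thm:properties_of_mana}.3), $\mana$ is nonincreasing under each of these, hence under their composition, so $\mana(\sigma)\le\mana(\rho^{\otimes N}\otimes\tau)=N\,\mana(\rho)$, and rearranging gives $N\ge\mana(\sigma)/\mana(\rho)$. The bound carries content precisely when $\mana(\rho)>0$, i.e. when $\rho$ is itself KD nonpositive; if $\rho$ is KD positive then $\mana(\rho)=0$ and---consistently with the simulation results above---no finite number of copies can distill a KD-nonpositive $\sigma$, so the right-hand side is $+\infty$ and the statement again holds.

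The only subtlety is the precise notion of ``distillation'' one adopts. For the deterministic, exact version asserted in the corollary, the argument is the three-line chain above and nothing more is required. If one instead permits the protocol to succeed only probabilistically or to output $\sigma$ within trace-distance $\epsilon$, one should argue about the state actually produced on the output register: handle the measurement branches using that $\mana$ does not increase on average (equivalently, that $\sum_{\g,\c}|\KD_{\g,\c}(\cdot)|$ is convex and nonincreasing under the channels of the model), and then use continuity of $\mana$---inherited from continuity of $\rho\mapsto\KD_{\g,\c}(\rho)$ and of $\log$ bounded away from $0$---to pass from $\mana$ of the approximate output to $\mana(\sigma)$ up to an $O(\epsilon)$ correction. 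The single genuine ingredient---and the only place where real work is done---is Theorem~\ref{thm:properties_of_mana}.3, in particular its verification for the measurement-and-feedforward steps of the DGBR model (Supplemental Note~VII); granting that, the counting bound follows immediately.
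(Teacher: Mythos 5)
Your argument is correct and is exactly the one the paper intends: the paper itself gives no more than a pointer to the standard resource-theoretic counting argument of Ref.~\cite{Veitch2014resourcetheory}, i.e.\ $\mana(\sigma)\le\mana(\rho^{\otimes N}\otimes\tau)=N\,\mana(\rho)$ via faithfulness, additivity, and monotonicity from Theorem~\ref{thm:properties_of_mana}. Your additional remarks on the $\mana(\rho)=0$ case and on approximate/probabilistic distillation go slightly beyond what the paper records, but do not change the route.
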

\noindent This corollary firmly establishes KD nonpositivity as a resource for rebit quantum computation. 

\textit{Discussion:---}We have unveiled new methods for examining the DGBR model of quantum computation. We generalized the rebit-restricted DGBR distribution to an informationally complete $\G$-KD distribution. Doing so allowed us to repurpose tools developed for the study of the KD distribution to examine the states that when appended to the DGBR model still results in a classically simulable model. Thus, we constructed and analyzed previously unknown bound magic states for the DGBR model, extending the volume of classically simulable input states by $15$ \%, compared to the GK theorem. Furthermore, we defined the KD mana, related to the nonpositivity of the $\G$-KD distribution. The KD mana, we showed, is a resource monotone for DGBR computation with arbitrary input states. The KD mana provides an easy-to-calculate bound on the efficiency of distillation protocols. Our work shows that KD nonpositivity is a necessary resource for quantum advantage.

\medskip

\textit{Acknowledgements:---}JJT was supported by the Cambridge Trust. The authors thank Christopher Long, Joe Smith, Richard Jozsa, Nicole Yunger Halpern, and Ryuji Takagi for helpful discussions. 
This work was supported in part by the Agence
Nationale de la Recherche under grant ANR-11-LABX0007-01 (Labex CEMPI), by the Nord-Pas de Calais
Regional Council and the European Regional Development Fund through the Contrat de Projets \'Etat-R\'egion 
(CPER), and by the CNRS through the MITI interdisciplinary programs.

\bibliography{Bibliography}

\pagebreak
\widetext
\newpage
\begin{center}
\textbf{\large Supplemental Material for \\ Kirkwood-Dirac Negativity is a Necessary Resource for Quantum Computing}
\end{center}

\setcounter{equation}{0}
\setcounter{figure}{0}
\setcounter{table}{0}
\setcounter{page}{1}
\makeatletter
\renewcommand{\theequation}{S\arabic{equation}}
\setcounter{section}{0}
\renewcommand{\thesection}{\Roman{section}}
\setcounter{Theorem}{0}
\renewcommand{\theTheorem}{S\arabic{Theorem}}

\section{Proof of the DGBR connection}
We prove Theorem 1.1 of the main article in this Supplemental Note. The phase-space point operators of the $\G$-KD distribution are
\begin{equation}
    B_{\g,\c} = \xket{\c} {\xzbraket{\c}{\g}}{\zbra{\g}}.
\end{equation}

We start by showing that the phase-space point operators of the $\G$-KD distribution may be expressed in terms of the phase-space point operators of the DGBR distribution.
\begin{Lemma} The phase-space point operators of the DGBR distribution are the hermitianizations of the phase-space point operators of the $\G$-KD distribution:
    \begin{equation}\label{eq:hermitianization_equals}
        A_{\g,\c} = \frac{1}{2} (B_{\g,\c} + B_{\g,\c}^{\dagger}).
    \end{equation}
\end{Lemma}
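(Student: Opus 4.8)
The plan is to reduce the operator identity \eqref{eq:hermitianization_equals} to the single base point $\g=\z,\c=\z$ using the Pauli covariance of both families of phase-space point operators, and then to verify the base case by a short direct computation.

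\emph{Reduction.} By definition \eqref{eq:DGBRphasespacepointoperators}, $A_{\g,\c}=P_{\g,\c}A_{\z,\z}P_{\g,\c}^{\dagger}$. First I claim the $\G$-KD point operators obey the analogous relation $B_{\g,\c}=P_{\g,\c}^{\dagger}B_{\z,\z}P_{\g,\c}$. Indeed, Covariance Relationship \eqref{eq:covariance_of_paulis} reads $\Tr(B_{\g_0,\c_0}\,P_{\g,\c}\rho P_{\g,\c}^{\dagger})=\Tr(B_{\g_0+\g,\c_0+\c}\rho)$ for every $\rho$, and cyclicity of the trace turns this into the operator equality $P_{\g,\c}^{\dagger}B_{\g_0,\c_0}P_{\g,\c}=B_{\g_0+\g,\c_0+\c}$; taking $\g_0=\z,\c_0=\z$ gives the claim. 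Since $P_{\g,\c}$ is unitary, conjugation commutes with hermitianization, so $\tfrac12(B_{\g,\c}+B_{\g,\c}^{\dagger})=P_{\g,\c}^{\dagger}\big[\tfrac12(B_{\z,\z}+B_{\z,\z}^{\dagger})\big]P_{\g,\c}$. Finally, $P_{\g,\c}^{\dagger}=(-1)^{\g\cdot\c}P_{\g,\c}$, so conjugation by $P_{\g,\c}^{\dagger}$ and by $P_{\g,\c}$ coincide, and it suffices to prove $A_{\z,\z}=\tfrac12(B_{\z,\z}+B_{\z,\z}^{\dagger})$.

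\emph{Base case.} Here $\zket{\z}=\ket{0}^{\otimes n}$ and $\xket{\z}=\ket{+}^{\otimes n}$, so $B_{\z,\z}=\xzbraket{\z}{\z}\,\ket{+}^{\otimes n}\!\bra{0}^{\otimes n}=2^{-n/2}\ket{+}^{\otimes n}\!\bra{0}^{\otimes n}$, whence $\tfrac12(B_{\z,\z}+B_{\z,\z}^{\dagger})=\tfrac12\,2^{-n/2}\big(\ket{+}^{\otimes n}\!\bra{0}^{\otimes n}+\ket{0}^{\otimes n}\!\bra{+}^{\otimes n}\big)$. For the other side, write $Z^{\vbold}X^{\ubold}$ for $P_{\ubold,\vbold}$ and replace the support condition by the weight $\tfrac12\big(1+(-1)^{\ubold\cdot\vbold}\big)$, which equals $1$ when $\ubold\cdot\vbold\equiv 0\bmod 2$ and $0$ otherwise; using the Pauli commutation identity $(-1)^{\ubold\cdot\vbold}Z^{\vbold}X^{\ubold}=X^{\ubold}Z^{\vbold}$ this gives $2^{2n}A_{\z,\z}=\tfrac12\big(\sum_{\ubold,\vbold}Z^{\vbold}X^{\ubold}+\sum_{\ubold,\vbold}X^{\ubold}Z^{\vbold}\big)$. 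Each double sum factorizes over qubits and collapses via $\sum_{\vbold}Z^{\vbold}=\prod_{j}(I+Z_j)=2^{n}\ket{0}^{\otimes n}\!\bra{0}^{\otimes n}$ and $\sum_{\ubold}X^{\ubold}=2^{n}\ket{+}^{\otimes n}\!\bra{+}^{\otimes n}$, yielding $\sum_{\ubold,\vbold}Z^{\vbold}X^{\ubold}=2^{3n/2}\ket{0}^{\otimes n}\!\bra{+}^{\otimes n}$ and $\sum_{\ubold,\vbold}X^{\ubold}Z^{\vbold}=2^{3n/2}\ket{+}^{\otimes n}\!\bra{0}^{\otimes n}$. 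Since $2^{3n/2}/2^{2n+1}=\tfrac12\,2^{-n/2}$, the two expressions agree, which together with the reduction proves the Lemma (and substituting into \eqref{eq:DGBRinTermsOfA} gives Theorem~1.1).

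\emph{Main obstacle.} Nothing here is deep; the two points that need care are (i) encoding the constraint $\ubold\cdot\vbold\equiv 0$ as a character sum so that the restricted sum of $P_{\ubold,\vbold}$ splits into the two naturally ordered Pauli sums $\sum Z^{\vbold}X^{\ubold}$ and $\sum X^{\ubold}Z^{\vbold}$, each of which collapses to a rank-one operator, and (ii) reconciling the opposite conjugation conventions used for $A_{\g,\c}$ and $B_{\g,\c}$, which is exactly what the identity $P_{\g,\c}^{\dagger}=(-1)^{\g\cdot\c}P_{\g,\c}$ resolves.
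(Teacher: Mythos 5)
Your proof is correct and follows essentially the same route as the paper's: reduce to the base point $(\z,\z)$ by Pauli conjugation, then split the parity-constrained sum defining $A_{\z,\z}$ via the weight $\tfrac{1}{2}\left(1+(-1)^{\ubold\cdot\vbold}\right)$ and identify the two resulting unrestricted Pauli sums with $2^{3n/2}\ket{0}^{\otimes n}\bra{+}^{\otimes n}$ and its adjoint. The only differences are cosmetic: the paper performs the reduction by directly computing $P_{\g,\c}^{\dagger}B_{\g,\c}P_{\g,\c}$ rather than invoking the Pauli covariance relation (which is anyway imported independently from Ref.~\cite{debièvre2025}, so your use of it is not circular), and it verifies the rank-one collapse by evaluating the matrix elements $\zbra{\h}\sum_{\g,\c}P_{\g,\c}\xket{\e}$ instead of factorizing $\sum_{\vbold}Z^{\vbold}$ and $\sum_{\ubold}X^{\ubold}$ into projectors.
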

\begin{proof}
        Using that $A_{\g,\c} = \Pauli_{\g,\c} A_{\z,\z} \Pauli_{\g,\c}^{\dagger}$, one may rewrite Eq. \eqref{eq:hermitianization_equals} as
    \begin{equation}\label{eq:A00intermsofB}
        \frac{1}{2}\left(\Pauli_{\g,\c}^\dagger B_{\g,\c} \Pauli_{\g,\c} + (\Pauli_{\g,\c}^\dagger B_{\g,\c} \Pauli_{\g,\c})^{\dagger}\right) = A_{\z,\z}.
    \end{equation}
    One may calculate the two terms on the left-hand side to be
    \begin{align}
        \Pauli_{\g,\c}^\dagger B_{\g,\c} \Pauli_{\g,\c} &= \prod_{j=1}^n X_j^{g_j} Z_j^{\chi_j}\xket{\c}\xzbraket{\c}{\g}\zbra{\g} \prod_{k=1}^nZ_k^{\chi_k} X_k^{g_k} \nonumber\\
        &= \xzbraket{\c}{\g} (-1)^{\c\cdot\g}\xket{(\c+\c)}\zbra{(\g+\g)} \nonumber\\
        &= \frac{1}{2^{n/2}}\ket{++\dots+}\bra{00\dots0},\label{eq:PBPexpanded}
    \end{align}

    where we used that $X$ and $Z$ act as boost and shift operators on $\zket{\g}$ and $\xket{\c}$ to go from the first to the second line, and we used that $\c$ and $\g$ are their own inverses to go from the second to the third line. Combining Eqs. \eqref{eq:A00intermsofB} and \eqref{eq:PBPexpanded}, we find that the proof amounts to showing that
    \begin{equation}\label{eq:A00simplified}
        A_{\z,\z} = \frac{1}{2\cdot2^{n/2}}\left[\ket{++\dots+}\bra{00\dots0} + (\ket{++\dots+}\bra{00\dots0})^\dagger\right].
    \end{equation}
    We show this by expanding $A_{\z,\z}$:
    \begin{align}
        A_{\z,\z} &= \frac{1}{2^{2n}} \sum_{\substack{\g,\c : \\\g\cdot\c = 0 \mod 2}}\Pauli_{\g,\c}\nonumber\\
        &= \frac{1}{2\cdot2^{2n}}\left(\sum_{\g,\c}\Pauli_{\g,\c} + \sum_{\g,\c}(-1)^{\g\cdot\c}\Pauli_{\g,\c}\right)\nonumber\\
        &= \frac{1}{2\cdot2^{2n}}\left(\sum_{\g,\c}\Pauli_{\g,\c} + \left[\sum_{\g,\c}\Pauli_{\g,\c}\right]^{\dagger}\right)\label{eq:A00symmetrized}
    \end{align}
    Comparing Eqs. \eqref{eq:A00simplified} and Eqs. \eqref{eq:A00symmetrized}, we find that it suffices to show that 
    \begin{equation}\label{eq:twooperatorsneedingtobeequal}
        \frac{1}{2^{n/2}}\ket{00\dots0}\bra{++\dots+} = \frac{1}{2^{2n}}\sum_{\g,\c}\Pauli_{\g,\c}.
    \end{equation}
    We do this by checking the matrix elements of the right-hand side:
    \begin{align}
        \frac{1}{2^{2n}}\zbra{\h}\sum_{\g,\c}\Pauli_{\g,\c}\xket{\e} &= \frac{1}{2^{2n}}\sum_{\g,\c} \zbra{\h} \prod_{j}Z_{j}^{\chi_j} X_{j}^{g_j} \xket{\e}\nonumber\\
        &= \frac{1}{2^{2n}}\sum_{\g,\c} (-1)^{\g\cdot\c} \zxbraket{(\g+\h)}{(\c+\e)}\nonumber\\
        &= \frac{(-1)^{\h\cdot\e}}{2^{n/2}2^{2n}}\sum_{\g} (-1)^{\g\cdot\e}\sum_{\c}(-1)^{\h\cdot\c}\nonumber\\
        &= \frac{1}{2^{n/2}}\delta_{\h,\z}\delta_{\e,\z},\label{eq:matrixelements}
    \end{align}
    where we used the properties of $X$ and $Z$ as boost and shift operators to go from the first to the second line, the fact that $\zxbraket{\g}{\c} = (-1)^{\g\cdot\c}/2^{n/2}$ to go from the second to the third line, and the fact that $\sum_{\g}(-1)^{\g\cdot\c} = 2^{n}\delta_{\c,\z}$ to go from the third to the fourth line. Eq. \eqref{eq:matrixelements} establishes Eq. \eqref{eq:twooperatorsneedingtobeequal}, completing the proof.
\end{proof}

We now use this lemma to prove Theorem 1.1.

\begin{Lemma}[DGBR Connection]
    Let $W_{\g,\c}$ and $\KD_{\g,\c}$ denote the DGBR distribution and the $\G$-KD distribution respectively. Then  $\Re\left[\KD_{\g,\c}(\rho)\right] = W_{\g,\c}(\rho)$.
\end{Lemma}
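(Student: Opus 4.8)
The plan is to simply chain together the definition of the DGBR distribution, Eq.~\eqref{eq:DGBRinTermsOfA}, with the Lemma just established relating the two families of phase-space point operators, Eq.~\eqref{eq:hermitianization_equals}. Starting from $W_{\g,\c}(\rho) = \Tr(A_{\g,\c}\rho)$, I would substitute $A_{\g,\c} = \tfrac{1}{2}(B_{\g,\c} + B_{\g,\c}^{\dagger})$ and use linearity of the trace to write
\begin{equation}
    W_{\g,\c}(\rho) = \frac{1}{2}\left[\Tr(B_{\g,\c}\rho) + \Tr(B_{\g,\c}^{\dagger}\rho)\right].
\end{equation}
The one small observation needed is that, because $\rho$ is Hermitian, $\Tr(B_{\g,\c}^{\dagger}\rho) = \overline{\Tr(\rho^{\dagger}B_{\g,\c})} = \overline{\Tr(B_{\g,\c}\rho)}$, where I use cyclicity of the trace and $\rho = \rho^{\dagger}$. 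Hence the bracket is $\Tr(B_{\g,\c}\rho) + \overline{\Tr(B_{\g,\c}\rho)} = 2\Re\,\Tr(B_{\g,\c}\rho)$, and recalling $\KD_{\g,\c}(\rho) = \Tr(B_{\g,\c}\rho)$ from Eq.~\eqref{eq:Z/2Z_KD_defn} gives $W_{\g,\c}(\rho) = \Re[\KD_{\g,\c}(\rho)]$, as claimed.

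There is no real obstacle here: the entire content of Theorem~1.1 has been pushed into the preceding Lemma, whose proof handled the nontrivial operator identity $A_{\g,\c} = \tfrac12(B_{\g,\c}+B_{\g,\c}^{\dagger})$. The only point worth stating carefully is the Hermiticity step, since it is what converts the Hermitianization of the point operators into the real-part operation on the distribution; I would make sure to note explicitly that this uses $\rho^{\dagger} = \rho$ (the identity would fail for a general, non-Hermitian operator argument). Everything else is a one-line computation.
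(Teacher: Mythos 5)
Your proof is correct, and it rests on the same key lemma as the paper's: the hermitianization identity $A_{\g,\c} = \tfrac12(B_{\g,\c}+B_{\g,\c}^{\dagger})$, which is where all the real work lives. Where you differ is in the concluding step. The paper expands $\Tr\bigl[\tfrac12(B_{\g,\c}+B_{\g,\c}^{\dagger})\rho\bigr]$ in the computational basis, splits $\rho$ into real and imaginary parts, invokes the fact that the matrix elements of $B_{\g,\c}$ are real in that basis, and uses the symmetry of $\Re\rho$ and antisymmetry of $\Im\rho$ under reindexing to cancel the imaginary contributions. You instead observe, basis-independently, that $\Tr(B_{\g,\c}^{\dagger}\rho) = \overline{\Tr(B_{\g,\c}\rho)}$ for Hermitian $\rho$, so the bracket is $2\Re\Tr(B_{\g,\c}\rho)$ immediately. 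Your version is shorter, avoids any appeal to the reality of $B_{\g,\c}$'s matrix elements, and isolates exactly the hypothesis being used ($\rho=\rho^{\dagger}$); it is the cleaner way to finish. The paper's coordinate computation buys nothing extra here beyond making the real structure explicit. Your explicit remark that the identity fails for non-Hermitian arguments is the right caveat to record.
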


\begin{proof}
    Substituting Eq. \eqref{eq:hermitianization_equals} into Eq. (2) of the main article and evaluating the trace in the computational basis yields
    \begin{align}
        W_{\g,\c}(\rho) &= \Tr\left[\frac{1}{2}(B_{\g,\c} + B_{\g,\c}^{\dagger})\rho\right]\nonumber\\
        &= \frac{1}{2} \sum_{\h,\h'} \zbra{\h}B_{\g,\c}\zket{\h'} \Re \left(\zbra{\h'}\rho\zket{\h}\right) + \frac{1}{2}i \sum_{\h,\h'} \zbra{\h}B_{\g,\c}\zket{\h'} \Im \left(\zbra{\h'}\rho\zket{\h}\right)\nonumber\\
        &+ \frac{1}{2} \sum_{\h,\h'} \zbra{\h'}B_{\g,\c}\zket{\h} \Re \left(\zbra{\h'}\rho\zket{\h}\right) + \frac{1}{2} i\sum_{\h,\h'} \zbra{\h'}B_{\g,\c}\zket{\h} \Im \left(\zbra{\h'}\rho\zket{\h}\right)
    \end{align}
    where we used that the matrix elements of the phase-space point operators of the $\G$-KD distribution are real in the computational basis. Reindexing the third and fourth sum and using that the real and imaginary parts of $\rho$ are symmetric and anti-symmetric respectively, we find that
    \begin{align}
        W_{\g,\c}(\rho)  &= \frac{1}{2} \sum_{\h,\h'} \zbra{\h}B_{\g,\c}\zket{\h'} \Re \left(\zbra{\h'}\rho\zket{\h}\right) + \frac{1}{2}i \sum_{\h,\h'} \zbra{\h}B_{\g,\c}\zket{\h'} \Im \left(\zbra{\h'}\rho\zket{\h}\right)\nonumber\\
        &+ \frac{1}{2} \sum_{\h,\h'} \zbra{\h}B_{\g,\c}\zket{\h'} \Re \left(\zbra{\h'}\rho\zket{\h}\right) - \frac{1}{2}i \sum_{\h,\h'} \zbra{\h}B_{\g,\c}\zket{\h'} \Im \left(\zbra{\h'}\rho\zket{\h}\right)\nonumber\\
        &= \Re\sum_{\h,\h'} \zbra{\h}B_{\g,\c}\zket{\h'} \Re \left(\zbra{\h'}\rho\zket{\h}\right)\nonumber\\
        &= \Re \Tr(B_{\g,\c}\rho)\nonumber\\
        &= \Re\left[ \KD_{\g,\c}(\rho)\right]
    \end{align}
\end{proof}

\section{Proof of Hudson's Theorem for the $\G$-KD Distribution}
We prove Theorem 1.2 of the main article in this Supplemental Note.
\begin{Lemma}[Hudson's Theorem for the $\G$-KD Distribution] 
    Let $\ket{\psi} \in \HilbertSpace$. $\ket{\psi}$ is KD positive if and only if $\ket{\psi}$ is a CSS state.
\end{Lemma}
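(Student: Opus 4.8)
The plan is to take as given the classification of pure KD-positive states for a finite Abelian group established in Ref.~\cite{debièvre2025} and quoted in Eq.~\eqref{eq:KD_positive_pure_states}: every pure KD-positive state equals $\ket{H;\g,\c}$ for some subgroup $H\le\G$ and some labels $\g,\c$. The lemma then reduces to the stabilizer-theoretic identity that the family $\{\ket{H;\g,\c}\}$ is exactly the set of CSS states. I would prove the two inclusions by computing, in each direction, the relevant maximal stabilizing subgroup.

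For the forward inclusion, since $X_j$ and $Z_j$ act as the shift and boost operators on the computational basis, a direct computation yields $\Pauli_{\ubold,\z}\ket{H;\g,\c}=(-1)^{\c\cdot\ubold}\ket{H;\g,\c}$ for every $\ubold\in H$ and $\Pauli_{\z,\vbold}\ket{H;\g,\c}=(-1)^{\vbold\cdot\g}\ket{H;\g,\c}$ for every $\vbold\in H^{\perp}:=\{\vbold\in\G:\vbold\cdot\h=0\bmod 2\ \forall\h\in H\}$. Put $S_X:=\{(-1)^{\c\cdot\ubold}\Pauli_{\ubold,\z}:\ubold\in H\}$ and $S_Z:=\{(-1)^{\vbold\cdot\g}\Pauli_{\z,\vbold}:\vbold\in H^{\perp}\}$. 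Then $S_X$ and $S_Z$ are $X$-type and $Z$-type subgroups of the Pauli group, since $\Pauli_{\ubold,\z}\Pauli_{\ubold',\z}=\Pauli_{\ubold+\ubold',\z}$ and the analogous relation holds for $Z$-type strings; every element of $S_X$ commutes with every element of $S_Z$ because $\ubold\cdot\vbold=0\bmod 2$ on $H\times H^{\perp}$; and the $|H|\,|H^{\perp}|=2^{\dim H}2^{\,n-\dim H}=d$ products are pairwise distinct, so $\langle S_Z\cup S_X\rangle$ is a commuting subgroup of order $d$ not containing $-I$, i.e.\ a full stabilizer group. As all of its elements fix $\ket{H;\g,\c}$, that state is its unique common $+1$ eigenvector, and the group is of CSS form by construction; hence $\ket{H;\g,\c}$ is a CSS state.

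For the converse, let $\ket{\psi}$ have maximal stabilizer $\langle S_Z\cup S_X\rangle$ with $S_Z$ of $Z$-type and $S_X$ of $X$-type, and set $V_X:=\{\ubold:\pm\Pauli_{\ubold,\z}\in S_X\}$, $V_Z:=\{\vbold:\pm\Pauli_{\z,\vbold}\in S_Z\}$, which are subspaces of $\G$. Commutativity forces $V_X\subseteq V_Z^{\perp}$, and $|V_X|\,|V_Z|=d$ (maximality) forces $\dim V_X+\dim V_Z=n$, hence $V_X=V_Z^{\perp}=:H$ and $V_Z=H^{\perp}$. The sign pattern on $S_X$ is a character of $H$, so it has the form $\ubold\mapsto(-1)^{\c\cdot\ubold}$ for some $\c$, and likewise the sign pattern on $S_Z$ has the form $\vbold\mapsto(-1)^{\vbold\cdot\g}$ for some $\g$. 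Comparing with the forward computation, $\ket{H;\g,\c}$ is a $+1$ eigenvector of the same order-$d$ stabilizer, so it coincides with $\ket{\psi}$ up to a phase. Together with Ref.~\cite{debièvre2025}, this proves the lemma.

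I do not expect a conceptual obstacle here, as the analytic heavy lifting is imported from Ref.~\cite{debièvre2025}; the delicate part is the bookkeeping in the two stabilizer computations. One must in particular (a) track the signs $(-1)^{\c\cdot\ubold}$ and $(-1)^{\vbold\cdot\g}$ that conjugating the ``base'' CSS stabilizer by $\Pauli_{\g,\c}$ produces---these are exactly what let the CSS sign data be matched with the labels $\g$ and $\c$---and (b) use commutativity together with the order-$d$ (maximality) condition to pin down $V_X=V_Z^{\perp}$, which is what guarantees the reconstructed group is a \emph{full} stabilizer rather than a proper subgroup.
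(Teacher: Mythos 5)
Your proposal is correct and follows essentially the same route as the paper: import the classification of pure KD-positive states as $\ket{H;\g,\c}$ from Ref.~\cite{debièvre2025}, then establish the two inclusions by computing maximal stabilizing subgroups, matching the sign data to the labels $\g,\c$ and using commutativity plus the order-$d$ condition to identify $H^{\perp}$. The only cosmetic difference is that you count $|H|\,|H^{\perp}|=d$ by $\mathbb{Z}_2$-dimension counting where the paper invokes the first isomorphism theorem for character groups; both are valid.
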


\begin{proof}
    In \cite{debièvre2025}, it was shown that for a KD distribution based on the group $G$, $\ket{\psi}$ is KD positive if and only if it is of the form 
\begin{equation}
        \ket{H; \g, \c}\coloneqq \frac{1}{|H|^{1/2}}\Pauli_{\g,\c}\sum_{\h\in H}\zket{\h} \coloneqq\Pauli_{\g,\c}\ket{H},
    \end{equation}
    where $H$ is a subgroup of $G$. We are thus left to show that (i) $\ket{H; \g, \c}$ is a CSS state, and that (ii) all CSS states are of the form $\ket{H; \g, \c}$.

    We start with (i). To determine if $\ket{H; \g, \c}$ is a CSS state, we need to find its maximal stabilizing subgroup. We first focus on $\ket{H}$. Let $H^{\perp}$ denote the set of characters in $\hat{G}$ that are trivial on $H$. That is $H^{\perp} \coloneqq \{\c \in \hat{G}| \c(\h) = 1\forall \h \in H \}$. We consider the following subgroup of the Pauli group:
    \begin{equation}\label{eq:stabilizing_subgroup_of_H}
        \langle\{\Pauli_{\h,\z}\}_{\h\in H} \cup\{\Pauli_{\z,\e}\}_{\e\in H^{\perp}} \rangle  = \{P_{\h,\e}| h\in H, \e \in H^{\perp}\}.
    \end{equation}

    Since $H$ is closed under group addition and since group addition is an isomorphism, we find that $\Pauli_{\h,\z}\ket{H} = \ket{H+\h} = \ket{H}$. Furthermore, since $\e \in H^{\perp}$, we also find that $\Pauli_{\z,\e}\ket{H} = \ket{\e(H)} = \ket{H}$. Hence Eq. \eqref{eq:stabilizing_subgroup_of_H} is a stabilizing subgroup. To show that this stabilizing subgroup indeed uniquely determines $\ket{H}$, we must show that it is maximal in the sense that its size is $2^n$ \cite{Aaronson2004}. 

    In view of Eq. \eqref{eq:stabilizing_subgroup_of_H}, there are $|H| \times |H^{\perp}|$ distinct stabilizers. We must therefore show that $|H|\times  |H^{\perp}| = 2^n$.

    To this end, we consider the group homomorphism $\phi: \hat{G}\rightarrow\hat{H}$ defined by the restriction map $\c \mapsto \c|_H$, where $\hat{H}$ is the character group of $H$. Clearly, $\ker \phi = H^{\perp}$ and $\range \phi = \hat{H}$, where $\range$ denotes the range (image) of a function. The first isomorphism theorem then implies that
    \begin{equation}
        \hat{G}/H^{\perp} = \hat{G}/\ker \phi \cong \range \phi = \hat{H} \cong H
    \end{equation}
    from which it follows that $|H|\times|H^{\perp}| = |G| = 2^n$. The stabilizing subgroup Eq. \eqref{eq:stabilizing_subgroup_of_H} is therefore maximal and thus uniquely defines $\ket{H}$.

    To characterize the stabilizing subgroup of $\ket{H; \g, \c}$, note that if $S$ stabilizes $\ket{\psi}$, then $\Pauli_{\g,\c}S\Pauli_{\g,\c}^\dagger$ stabilizes $\Pauli_{\g,\c} \ket{\psi}$. The maximal stabilizing subgroup for $\ket{H;\g,\c}$ is therefore
    \begin{equation}\label{eq:Stabilizing_subgroup_of_Hgc}
        \langle\{(-1)^{\h\cdot\c}\Pauli_{\h,\z}\}_{\h\in H} \cup\{(-1)^{\g\cdot\e}\Pauli_{\z,\e}\}_{\e\in H^{\perp}} \rangle,
    \end{equation}
    where we used that $\Pauli_{\g,\c}\Pauli_{\h,\e} = (-1)^{\g\cdot\e + \h\cdot\c}\Pauli_{\h,\e}\Pauli_{\g,\c}$ and that $\Pauli_{\g,\c}\Pauli_{\g,\c}^\dagger = I$. The stabilizing subgroup of $\ket{H; \g, \c}$ therefore is of the form $\langle U \cup V\rangle$, where $U$ and $V$ are sets containing Pauli operators with only $Z$s and $X$s respectively. Hence $\ket{H; \g, \c}$ is a CSS state.

    We move on to proving the converse statement (ii). If $\ket{\psi}$ is a CSS state, then it has a stabilizing subgroup of the form $\langle S_Z \cup S_X\rangle$, where we may take $S_Z$ and $S_X$ to be subgroups of the Pauli group of the form $S_Z = \{(-1)^{f(\h)}\Pauli_{\h,\z}\}_{\h\in H}$ and $S_X = \{(-1)^{g(\e)}\Pauli_{\z,\e}\}_{\e\in K}$, where $f$ and $g$ are binary functions and $H$ and $K$ are subsets of $G,\hat{G}$ respectively \cite{Calderbank98, Steane1996, gottesman1997}. To show that this maximal stabilizing subgroup corresponds to a state of the form $\ket{H;\g,\c}$, we must show that it can be written in the form \eqref{eq:Stabilizing_subgroup_of_Hgc}. That is, we need to show that $H$ is a subgroup of $G$; that $K = H^{\perp}$; and that there exist $\g, \c \in G, \hat{G}$ such that $f(\h) = \c\cdot\h$ and $g(\e) = \g \cdot \e$.

    We start by showing that the subset $H$ is a subgroup of $G$. To do this, we must show that $H$ is closed under group addition and inversion and that $H$ must contain the identity. Suppose $\h,\h' \in H$, then
    \begin{equation}\label{eq:PaulihonPaulih}
        (-1)^{f(\h)}\Pauli_{\h,\z}(-1)^{f(\h')}\Pauli_{\h',\z} = (-1)^{f(\h)+f(\h')}\Pauli_{\h + \h',\z} \in S_Z.
    \end{equation}
    Hence, $\h + \h' \in H$, and closure under group addition follows. Furthermore, note that $\forall \h \in H, -\h = \h$ since $H$ is a subset of $G = \G$. Hence, $H$ is also closed under inversion. Last, since $H$ is closed under group addition and $\h + \h = \z$, we have that $H$ also contains the identity. Hence, $H$ is a subgroup of $G$. Similar arguments show that $K$ is a subgroup of $\hat{G}$

    Next, we prove that $K = H^{\perp}$. Let $\bm{k} \in K$. Note that since $\langle S_Z \cup S_X\rangle$ is a stabilizing subgroup, all its elements must commute. Hence, $\forall \h \in H$, $\Pauli_{\bm{k},\z}\Pauli_{\h,\z} = \Pauli_{\h,\z}\Pauli_{\bm{k},\z}$. But from the Pauli commutation relationships, we also know that $\Pauli_{\bm{k},\z}\Pauli_{\h,\z} = (-1)^{\bm{k}\cdot\h}\Pauli_{\h,\z}\Pauli_{\bm{k},\z}$. This implies that $\forall \h \in H$, $\bm{k}(\h) = (-1)^{\bm{k}\cdot\h} = 1$. Hence, $\bm{k} \in H^{\perp}$ and thus $K \subset H^{\perp}$. Furthermore, since $\langle S_Z \cup S_X\rangle$ is a maximal stabilizing subgroup, we must have that $|K| = |H^{\perp}|$. It follows that $K = H^{\perp}$.

    Last, we must show that there exist $\g, \c \in G, \hat{G}$ such that $f(\h) = \c\cdot\h$ and $g(\e) = \g \cdot \e$. From Eq. \eqref{eq:PaulihonPaulih}, it follows that $f(\h) + f(\h') = f(\h + \h') $, i.e. $f$ is a linear map from $H$ to the set $\{0,1\}$. It follows that there must exist a $\c \in \hat{G}$ such that $f(\h) = \c\cdot\h$. The same argument also applies to $g$. We have thus shown that all CSS states are of the form $\ket{H;\g,\c}$, completing the proof.
\end{proof}
\section{Proof of the Covariance Relationships}
We prove Theorem 1.3 of the main article in this Supplemental Note.
\begin{Lemma}[Covariance]
    The $\G$-KD distribution obeys the following covariance relationships:
        \begin{align}\label{eq:covariance_of_paulisS}
            \KD_{\g_0,\c_0}(\Pauli_{\g,\c}\rho \Pauli_{\g,\c}^{\dagger}) &= \KD_{\g_0 + \g,\c_0 + \c}(\rho)\\ \label{eq:covariance_of_HadamardsS}
            \KD_{\g,\c}(\Hadamard^{\otimes n}\rho \Hadamard^{\otimes n\dagger}) &= \overline{\KD_{\c,\g}(\rho)} \\\label{eq:covariance_of_CNOTsS}
            \KD_{\g,\c}(\CX_{ct}\rho \CX_{ct}^\dagger) &= \KD_{A_{ct}\g, B_{ct}\c} (\rho),
        \end{align}
        where $(A_{ct})_{jk} \coloneqq \delta_{jk} + \delta_{tj}\delta_{ck}$ and $(B_{ct})_{jk} \coloneqq \delta_{jk} + \delta_{cj}\delta_{tk}$.
\end{Lemma}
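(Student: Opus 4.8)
The plan is to turn each covariance identity into a statement about how the rank-one phase-space point operators $B_{\g,\c}=\xket{\c}\,\xzbraket{\c}{\g}\,\zbra{\g}$ transform under conjugation by the gate, and only afterwards trace against $\rho$. By cyclicity of the trace, $\KD_{\g,\c}(U\rho U^{\dagger})=\Tr\!\big(U^{\dagger}B_{\g,\c}U\,\rho\big)$, so for $U\in\{\Pauli_{\g_1,\c_1},\,\Hadamard^{\otimes n},\,\CX_{ct}\}$ it suffices to identify $U^{\dagger}B_{\g,\c}U$ with a relabeling of another point operator (possibly after an adjoint). The only scalar in play is the overlap $\xzbraket{\c}{\g}=(-1)^{\c\cdot\g}/\sqrt{|G|}$, which is real and symmetric in its two arguments under the canonical identification $G\cong\hat G=\G$; tracking this reality is what separates the three cases. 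For Eq.~\eqref{eq:covariance_of_paulisS} there is nothing to reprove: the transformation of the group-$G$ KD point operators under the Heisenberg--Weyl operators is Eq.~(29) of Ref.~\cite{debièvre2025}, and specializing to $G=\G$ with the real Pauli strings of Eq.~\eqref{eq:real_Pauli_defn} gives exactly the shift $(\g_0,\c_0)\mapsto(\g_0+\g,\c_0+\c)$.

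For the Hadamard relation \eqref{eq:covariance_of_HadamardsS}, the ingredients are $\Hadamard^{\otimes n\dagger}=\Hadamard^{\otimes n}$, $\Hadamard^{\otimes n}\zket{\g}=\xket{\g}$, and $\Hadamard^{\otimes n}\xket{\c}=\zket{\c}$ (the last being the previous one applied twice). Substituting gives $\Hadamard^{\otimes n}B_{\g,\c}\Hadamard^{\otimes n}=\zket{\c}\,\xzbraket{\c}{\g}\,\xbra{\g}$; using that $\xzbraket{\c}{\g}$ is real and equal to $\braket*{\g^{(x)}}{\c^{(z)}}$, one recognizes the right-hand side as $B_{\c,\g}^{\dagger}$. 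Since $\Tr\!\big(B_{\c,\g}^{\dagger}\rho\big)=\overline{\Tr(B_{\c,\g}\rho)}$ for Hermitian $\rho$, tracing against $\rho$ returns $\overline{\KD_{\c,\g}(\rho)}$.

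For the CNOT relation \eqref{eq:covariance_of_CNOTsS}, I would use $\CX_{ct}^{\dagger}=\CX_{ct}$ together with the two basis actions $\CX_{ct}\zket{\g}=\zket{A_{ct}\g}$ (copy the control bit into the target, no phase) and $\CX_{ct}\xket{\c}=\xket{B_{ct}\c}$ (in the $X$-basis the same gate is CNOT with control and target interchanged, again no phase). Conjugating gives $\CX_{ct}B_{\g,\c}\CX_{ct}=\ket*{(B_{ct}\c)^{(x)}}\,\xzbraket{\c}{\g}\,\bra*{(A_{ct}\g)^{(z)}}$, which equals $B_{A_{ct}\g,\,B_{ct}\c}$ provided the scalars match, i.e.\ $(-1)^{\c\cdot\g}=(-1)^{(B_{ct}\c)\cdot(A_{ct}\g)}$. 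Writing the exponent on the right as $\c^{\mathsf T}B_{ct}^{\mathsf T}A_{ct}\,\g$, this reduces to $B_{ct}^{\mathsf T}A_{ct}=I$ over $\mathbb{Z}_2$; reading off the matrix entries gives $B_{ct}^{\mathsf T}=A_{ct}$, and $A_{ct}^{2}=I$ over $\mathbb{Z}_2$ because adding one row to another twice is the identity, so the scalars agree.

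I expect the only genuine care to be bookkeeping: keeping straight the identification $G\cong\hat G$ and the reality of $\xzbraket{\c}{\g}$ (this is precisely what produces a complex conjugate in the Hadamard line instead of a bare relabeling), and confirming that $\Hadamard^{\otimes n}$ and $\CX_{ct}$ act on the $X$-basis states without extraneous phases. No step requires a long computation; the $\mathbb{Z}_2$ identity $B_{ct}^{\mathsf T}A_{ct}=I$ is the last thing to check and is immediate from the definitions of $A_{ct}$ and $B_{ct}$.
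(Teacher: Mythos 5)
Your proposal is correct and follows essentially the same route as the paper: Eq.~\eqref{eq:covariance_of_paulisS} is delegated to Eq.~(29) of Ref.~\cite{debièvre2025}, and the Hadamard and CNOT relations are obtained from the phase-free actions $\Hadamard^{\otimes n}\zket{\g}=\xket{\g}$, $\CX_{ct}\zket{\g}=\zket{A_{ct}\g}$, $\CX_{ct}\xket{\c}=\xket{B_{ct}\c}$ together with the reality and symmetry of $\xzbraket{\c}{\g}$. The only cosmetic difference is that you conjugate the point operators and verify the scalar identity $(-1)^{\c\cdot\g}=(-1)^{(B_{ct}\c)\cdot(A_{ct}\g)}$ explicitly via $B_{ct}^{\mathsf T}A_{ct}=I$ over $\mathbb{Z}_2$, whereas the paper absorbs that check by inserting $\CX_{ct}\CX_{ct}=I$ (resp. $\Hadamard^{\otimes n}\Hadamard^{\otimes n}=I$) into the overlap factor.
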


\begin{proof}
    Covariance relation \eqref{eq:covariance_of_paulisS} follows directly from Eq. (29) in \cite{debièvre2025}. We move on to showing covariance relation $\eqref{eq:covariance_of_HadamardsS}$. The effect of the simultaneous Hadamard on such states is $\Hadamard^{\otimes n}\xket{\mathbf{a}} = \zket{\mathbf{a}}$. We thus find that
    \begin{align}
        \KD_{\g,\c}(\Hadamard^{\otimes n}\rho \Hadamard^{\otimes n\dagger}) &= \xbra{\c}\Hadamard^{\otimes n}\Hadamard^{\otimes n}\zket{\g}\zbra{\g}\Hadamard^{\otimes n}\rho\Hadamard^{\otimes n}\xket{\c\nonumber}\\
        &=\zxbraket{\c}{\g}\xbra{\g}\rho\zket{\c}\nonumber\\
        &= \overline{\xzbraket{\g}{\c}}\overline{\zbra{\c}\rho\xket{\g}}\nonumber\\
        &= \overline{\KD_{\c,\g}(\rho)},
    \end{align}
    proving the result.
    Finally we prove covariance relation \eqref{eq:covariance_of_CNOTsS}. To do this, we use that $\CX \zket{a}\otimes\zket{b} =\zket{a}\otimes\zket{(a + b)}$ and that $\CX \xket{a}\otimes\xket{b} =\xket{(a + b)}\otimes\xket{b}$. This implies that $\CX_{ct}\zket{\g} = \zket{A_{ct}\g}$ and $\CX_{ct}\xket{\c} = \xket{B_{ct}\c}$, where $(A_{ct})_{jk} \coloneqq \delta_{jk} + \delta_{tj}\delta_{ck}$ and $(B_{ct})_{jk} \coloneqq \delta_{jk} + \delta_{cj}\delta_{tk}$. We thus find that
    \begin{align}
        \KD_{\g,\c}(\CX_{ct}\rho \CX_{ct}^\dagger) &=\xbra{\c}\CX_{ct}\CX_{ct}\zket{\g}\zbra{\g}\CX_{ct}\rho\CX_{ct}\xket{\c}\nonumber\\
        &=\xzbraket{B_{ct}\c}{A_{ct}\g}\zbra{A_{ct}\g}\rho\xket{B_{ct}\c}\nonumber\\
        &= \KD_{A_{ct}\g, B_{ct}\c}
    \end{align}
    establishing the last covariance relation.
\end{proof}

\section{KD-Based Simulation Algorithm}
It was shown in the main article that KD-positive input states may be efficiently simulated using the simulation algorithm built on the DGBR distribution from Ref. \cite{Delfosse2015}. We now show that the DGBR simulation algorithm may be phrased entirely in terms of the $\G$-KD distribution.

\begin{algorithm}[H]
    \caption{Phase-Space Simulation Algorithm}\label{alg:phase_space_simulator}
    \begin{algorithmic}[1]
    \Statex \textbf{Input}: DGBR circuit $C$ and KD-positive input state $\inputstate$
    \Statex \textbf{Output}: bitstring $(b_j)_j$
    \State Sample $\g, \c$ from $Q(\inputstate)$ 
    \For{each gate in circuit $C$}
        \State \textbf{If} the gate is $\Pauli_{\g',\c'}$, set $\g \leftarrow \g + \g'$ and $\c \leftarrow \c + \c'$ 
        \State \textbf{If} the gate is $\Hadamard$, set $\g,\c \leftarrow \c,\g$
        \State \textbf{If} the gate is $\CX_{ct}$, set $\g \leftarrow A_{ct}\g$ and $\c \leftarrow B_{ct} \c$
        \State \textbf{If} the gate is a $Z$-measurement of the $j$th qubit:
        \State \indent Output the bit $b = g_j$
        \State \indent With probability $1/2$, set $\c \leftarrow \c + \mathbf{e}_j$.
    \EndFor
    \end{algorithmic}
\end{algorithm}

In the above, $\mathbf{e}_j$ is the vector with a $1$ in the $j$th entry and zero elsewhere, and $A_{ct}$ and $B_{ct}$ are defined in the main article.. This algorithm runs in linear time. To prove its correctness, we need the following lemma:

\begin{Lemma}\label{lem:updating_KD_distributions} If the $j$th qubit of a state $\rho$ is measured in the computational basis and we find output $b = g_j$, then the $\G$-KD distribution of the updated state $\rho'$  is 
    \begin{equation}
        \KD_{\g,\c}(\rho') = \frac{1}{2}\left[\KD_{\g,\c}(\rho) + \KD_{\g,\c+\mathbf{e}_j}(\rho)\right].
    \end{equation}
\end{Lemma}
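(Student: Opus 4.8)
The plan is to compute $\KD_{\g,\c}(\rho')$ directly. Write the projector recording outcome $b$ on the $j$th qubit as $\Pi_j^{(b)} = \tfrac12\bigl(I + (-1)^b Z_j\bigr)$, so that the updated state is $\rho' = \Pi_j^{(b)}\rho\,\Pi_j^{(b)}$ (normalized, if one wishes, by the outcome probability $\Tr(\Pi_j^{(b)}\rho) = \sum_{\g:\,g_j=b}\sum_{\c}\KD_{\g,\c}(\rho)$). By cyclicity of the trace, $\KD_{\g,\c}(\rho') = \Tr\bigl(\Pi_j^{(b)} B_{\g,\c}\,\Pi_j^{(b)}\rho\bigr)$, so the whole statement reduces to identifying the conjugated phase-space point operator $\Pi_j^{(b)} B_{\g,\c}\,\Pi_j^{(b)}$.

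The three ingredients I would use are: $\zbra{\g}\Pi_j^{(b)} = \delta_{g_j,b}\,\zbra{\g}$; $Z_j\xket{\c} = \xket{(\c+\mathbf{e}_j)}$, since $Z$ exchanges $\ket{+}$ and $\ket{-}$; and $\xzbraket{(\c+\mathbf{e}_j)}{\g} = (-1)^{g_j}\,\xzbraket{\c}{\g}$, immediate from $\xzbraket{\c}{\g} = (-1)^{\c\cdot\g}/2^{n/2}$. The first already forces $\Pi_j^{(b)} B_{\g,\c}\Pi_j^{(b)}$ to vanish unless $g_j = b$, which is exactly the hypothesis $b = g_j$ (on the complementary sector the updated distribution is identically zero, as it must be). On the sector $g_j = b$, insert $B_{\g,\c} = \xket{\c}\,\xzbraket{\c}{\g}\,\zbra{\g}$; the first ingredient makes the right projector act trivially, $\Pi_j^{(b)} B_{\g,\c}\,\Pi_j^{(b)} = \Pi_j^{(b)} B_{\g,\c}$, and expanding the left projector gives $\Pi_j^{(b)} B_{\g,\c} = \tfrac12 B_{\g,\c} + \tfrac12(-1)^b Z_j B_{\g,\c}$. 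By the second and third ingredients, $Z_j B_{\g,\c} = (-1)^{g_j} B_{\g,\c+\mathbf{e}_j} = (-1)^b B_{\g,\c+\mathbf{e}_j}$, so the two sign factors cancel and $\Pi_j^{(b)} B_{\g,\c}\,\Pi_j^{(b)} = \tfrac12\bigl(B_{\g,\c} + B_{\g,\c+\mathbf{e}_j}\bigr)$. Tracing this against $\rho$ yields the claimed identity. (The relation $Z_j B_{\g,\c} = (-1)^{g_j} B_{\g,\c+\mathbf{e}_j}$ can also be read off from the Pauli covariance relation \eqref{eq:covariance_of_paulisS} applied to $Z_j = \Pauli_{\z,\mathbf{e}_j}$.)

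I do not expect a real obstacle: the content is entirely the algebra of the phase-space point operators, and the computation is short. The only genuine point of care is the conditioning---the displayed identity holds only on the $g_j = b$ sector, and that restriction is precisely what encodes the measurement outcome; on the complement both sides vanish. A minor bookkeeping matter is the normalization convention for $\rho'$ (projected state versus normalized conditional state), but in either reading the result reproduces the update performed in Algorithm \ref{alg:phase_space_simulator}: output the bit $g_j$, then flip $\c \mapsto \c + \mathbf{e}_j$ with probability $1/2$.
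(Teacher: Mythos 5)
Your proof is correct and is essentially the paper's argument in dual form: the paper expands $\tfrac14[I+(-1)^bZ_j]\rho[I+(-1)^bZ_j]$ into four terms and evaluates the KD distribution of each using $\KD_{\g,\c}(Z_j\rho)=(-1)^{g_j}\KD_{\g,\c}(\rho)$, $\KD_{\g,\c}(\rho Z_j)=(-1)^{g_j}\KD_{\g,\c+\mathbf{e}_j}(\rho)$ and Pauli covariance, which is exactly your computation of $\Pi_j^{(b)}B_{\g,\c}\Pi_j^{(b)}$ transported onto the state by cyclicity, with the outcome-probability normalization left implicit in the same way. One small correction to your closing remark: off the $g_j=b$ sector the left-hand side vanishes but the right-hand side of the displayed identity generally does not, so the restriction to $g_j=b$ is a genuine hypothesis rather than an automatic consequence.
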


\begin{proof}
    After the measurement, the state is updated to
    \begin{align}
        \rho' &\propto \left[I+(-1)^{b}Z_j\right]\rho\left[I+(-1)^{b}Z_j\right] \nonumber\\
              &= \rho + (-1)^{b}Z_j\rho + (-1)^{b}\rho Z_j + Z_j \rho Z_j.\label{eq:updated_state}
    \end{align}
    One may straightforwardly calculate that
    \begin{equation}\label{eq:KD_Zrho}
        \KD_{\g,\c}(Z_j\rho) = (-1)^{g_j} \KD_{\g,\c}(\rho),
    \end{equation}
    and
    \begin{equation}\label{eq:KD_rhoZ}
        \KD_{\g,\c}(\rho Z_j) = (-1)^{g_j} \KD_{\g,\c +\mathbf{e}_j}(\rho).
    \end{equation}
    
    Taking the KD-distribution of Eq. \eqref{eq:updated_state}, and substituting in Eqs. \eqref{eq:KD_Zrho} and \eqref{eq:KD_rhoZ} together with covariance relation \eqref{eq:covariance_of_paulisS} then yields:
    \begin{equation}
        \KD_{\g,\c}(\rho') \propto \KD_{\g,\c}(\rho) + \KD_{\g,\c+\mathbf{e}_j}(\rho).
    \end{equation}
    Normalization yields the desired result.
\end{proof}

We now prove the correctness of our simulation algorithm:

\begin{Theorem}\label{thm:correctness}
    Let $C$ be a DGBR circuit with input state $\rho$. If $\rho$ is KD positive and we can sample efficiently from $\KD(\rho)$, then there exists an efficient classical algorithm that samples from the same probability distribution as the quantum circuit.
\end{Theorem}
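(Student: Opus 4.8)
The plan is to prove the correctness of Algorithm~\ref{alg:phase_space_simulator} by maintaining, as an invariant, that the register $(\g,\c)$ it stores is at every point a sample from the $\G$-KD distribution of the quantum circuit's current state, and that this distribution is always a genuine probability distribution. The first ingredient is that KD positivity is preserved by every operation of the DGBR model. For the unitary gates this is read off the covariance relations \eqref{eq:covariance_of_paulisS}--\eqref{eq:covariance_of_CNOTsS}: conjugating $\rho$ by $\Pauli_{\g',\c'}$, $\Hadamard^{\otimes n}$, or $\CX_{ct}$ merely permutes (and, for the Hadamard, complex-conjugates) the numbers $\KD_{\g,\c}(\rho)$, and complex conjugation is trivial once these are nonnegative reals. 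For a computational-basis measurement, Lemma~\ref{lem:updating_KD_distributions} writes the post-measurement KD distribution as an average of values of the pre-measurement one, hence still nonnegative. Therefore $\KD(\rho_k)$ is a proper probability distribution at every stage $k$ of the circuit, and in particular \eqref{eq:covariance_of_HadamardsS} reduces to $\KD_{\g,\c}(\Hadamard^{\otimes n}\rho\,\Hadamard^{\otimes n\dagger})=\KD_{\c,\g}(\rho)$ on the states that occur. (On such states $\KD$ coincides with the DGBR distribution $\DGBR$ by the DGBR connection, so Algorithm~\ref{alg:phase_space_simulator} is the simulator of Ref.~\cite{Delfosse2015} rephrased in KD variables; the argument below is nonetheless self-contained.)

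The core is an induction over the gates of $C$, with hypothesis: after the algorithm has processed the first $k$ operations and emitted the partial record $\bm{b}$, one has $\Pr[\text{record}=\bm{b},\ (\g,\c)=(\g,\c)]=\Pr_Q[\text{record}=\bm{b}]\,\KD_{\g,\c}(\rho_k^{\bm{b}})$, where $\rho_k^{\bm{b}}$ is the corresponding normalized post-selected quantum state and $\Pr_Q$ denotes the quantum circuit's statistics. The base case $k=0$ is exactly the assumed sampling from $\KD(\inputstate)$. For a unitary gate $U$, the relevant covariance relation has the form $\KD_{\g,\c}(U\rho U^\dagger)=\KD_{\sigma(\g,\c)}(\rho)$ for a bijection $\sigma$ of $\G\times\hat{G}$ — a translation by $(\g',\c')$ for $\Pauli_{\g',\c'}$, the coordinate swap for $\Hadamard^{\otimes n}$, and $(\g,\c)\mapsto(A_{ct}\g,B_{ct}\c)$ for $\CX_{ct}$ — and each such $\sigma$ is an involution (using $A_{ct}^2=B_{ct}^2=I$ over $\mathbb{Z}_2$). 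The algorithm's update applies exactly $\sigma^{-1}=\sigma$ to $(\g,\c)$, so by the covariance relation the pushed-forward law is $\KD(U\rho_k^{\bm{b}}U^\dagger)$, which the invariant keeps proper; no bit is emitted and the adaptive branch is unchanged. For a measurement of qubit $j$ (with projector $\Pi_b^{(j)}=\tfrac{1}{2}(I+(-1)^bZ_j)$), the key identity is $\sum_{\c}B_{\g,\c}=\zket{\g}\zbra{\g}$, obtained from $B_{\g,\c}=\xzbraket{\c}{\g}\,\xket{\c}\zbra{\g}$ and completeness of $\Bbasis$; it makes the $\c$-marginal of $\KD(\rho_k^{\bm{b}})$ equal to the computational-basis Born distribution $\g\mapsto\zbra{\g}\rho_k^{\bm{b}}\zket{\g}$, so the algorithm emits $b=g_j$ with probability $\Tr(\Pi_b^{(j)}\rho_k^{\bm{b}})$, matching the quantum measurement. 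Conditioning on that outcome and then replacing $\c$ by a uniform sample from $\{\c,\c+\mathbf{e}_j\}$ reproduces the $\G$-KD distribution of the post-measurement state $\rho_{k+1}^{\bm{b}b}$ up to an overall factor, which is exactly the Born weight $\Tr(\Pi_b^{(j)}\rho_k^{\bm{b}})$ already carried by the $\Pr_Q$ prefactor in the hypothesis; this step is Lemma~\ref{lem:updating_KD_distributions}. Adaptivity causes no difficulty: the algorithm holds the same record $\bm{b}b$ as the quantum device and therefore branches identically. Evaluating the hypothesis after the last gate shows that the emitted bit string is distributed exactly as the circuit's output.

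The efficiency statement is then immediate: step~1 is efficient by hypothesis, and every in-loop update is an $O(n)$ manipulation of the $2n$-bit register $(\g,\c)$ — a coordinate addition, a swap, or a multiplication by a sparse $\mathbb{Z}_2$ matrix — plus at most one fair coin flip, so the algorithm runs in time linear in $n$ and in the size of $C$. I expect the measurement step of the induction to need the most care: one must nail down the operator identity $\sum_{\c}B_{\g,\c}=\zket{\g}\zbra{\g}$ so that the Born rule comes out correctly, and then carefully bookkeep the joint law of the emitted record and the register $(\g,\c)$ through the adaptive branching so that post-selection and the $\c\mapsto\c+\mathbf{e}_j$ coin flip compose correctly with Lemma~\ref{lem:updating_KD_distributions}. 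The unitary steps, by contrast, reduce to checking that the algorithm's updates invert the phase-space maps of the covariance relations, which the involution property makes transparent.
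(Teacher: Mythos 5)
Your proposal is correct and follows essentially the same route as the paper's proof: an induction over the gates maintaining that the register $(\g,\c)$ is distributed according to $\KD$ of the current (outcome-conditioned) state, using the covariance relations for the unitaries and Lemma~\ref{lem:updating_KD_distributions} for measurements, and then matching the emitted bit to the Born rule. The only cosmetic difference is that you obtain the Born-rule match from the marginal identity $\sum_{\c}B_{\g,\c}=\zket{\g}\zbra{\g}$, whereas the paper computes the KD symbol $\tilde{\KD}_{\g,\c}\bigl(\tfrac{1}{2}(I+(-1)^{b}Z_j)\bigr)=\delta_{g_j,b}$ and invokes the overlap formula --- the same calculation in different clothing; your added care about the involution property of the phase-space maps and about the joint law of record and register under adaptivity only makes explicit steps the paper leaves implicit.
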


\begin{proof}
    We prove this by showing that Algorithm \ref{alg:phase_space_simulator} samples from the correct output distribution. We will use an approach similar to the one taken for the proof of Theorem 1 in Ref. \cite{Veitch2012}. Let $\rho_p$ denote the state of the qubits after the $p$th gate is applied. We first show by induction that after $p$ steps, $\g$ and $\c$ are distributed according to $\KD(\rho_p)$. This is trivially true for $p = 0$, establishing the base case. Assume it is true for $p$. If we apply a Pauli gate $\Pauli_{\g_0,\c_0}$ at time step $p$, then $\KD_{\g,\c}(\rho_{p+1}) = \KD_{\g\c}(\Pauli_{\g,\c}\rho_p \Pauli_{\g_0,\c_0}^{\dagger}) = \KD_{\g+\g_0,\c+\c_0}(\rho_p)$, where we used covariance relationship \eqref{eq:covariance_of_paulisS} in the second equality. It follows that updating $\g$ to $\g + \g_0$ and $\c$ to $\c + \c_0$ ensures that $\g$ and $\c$ are distributed according to $\KD_{\g,\c}(\rho_{p+1})$. Similar arguments apply to when $\Hadamard^{\otimes n}$ or $\CX_{ct}$ are applied.

    If a measurement is applied at timestep $t$, then we obtain the outcome $b = g$. Using Lemma \ref{lem:updating_KD_distributions}, we know that $\KD_{\g,\c}(\rho_{p+1})$ is distributed according to $\frac{1}{2}\left[\KD_{\g,\c}(\rho_p) + \KD_{\g,\c+\mathbf{e}_j}(\rho_p)\right]$. Thus updating $\c$ to $\c + \mathbf{e}_j$ with a probability 1/2 indeed ensures that $\g$ and $\c$ are distributed according to $\KD_{\g,\c}(\rho_{p+1})$. This completes the proof by induction.

    We are left to show that given a state $\rho_p$, measuring the $j$th qubit in the computational basis is indeed the same as sampling $\g,\c$ from $\KD(\rho)$ and returning $g_j$. The probability of getting outcome $b$ when measuring qubit $j$ is
    \begin{align}
        \mathbb{P}(Z_j = b | \rho_p)&= \Tr \left[\frac{1}{2}(1+ (-1)^{b}Z_j)\rho_p\right]\nonumber\\
                                    &= \sum_{\g,\c} \tilde{\KD}_{\g,\c}\left(\frac{1}{2}(1+ (-1)^{b}Z_j)\right) \KD_{\g,\c}(\rho_p),
    \end{align}
    where the overlap formula (Eq. (8) of the main article) was used. A straightforward calculation shows that
    \begin{align}
        \tilde{\KD}_{\g,\c}\left(\frac{1}{2}(1+ (-1)^{b}Z_j)\right) &= \frac{\xbra{\chi_j}I+(-1)^{b}Z\zket{g_j}}{\xzbraket{\chi_j}{g_j}}\nonumber\\
        &= \delta_{g_j,b}.
    \end{align}
    We thus find that 
    \begin{equation}
        \mathbb{P}(Z_j = b | \rho_p) = \sum_{\g,\c: g_j = b} \KD_{\g\c}(\rho_p).
    \end{equation}
    \textit{Id est}, the probability of obtaining outcome $b$ is equal to sampling outcome $\g,\c$ from $\KD(\rho)$ and obtaining $g_j = b$.
\end{proof}

\section{Analytical Construction of Bound States}
In the main article, we considered an operator $F$ normal to a facet of both the CSS and rebit stabilizer polytopes:
\begin{equation}
    F = 
    \begin{pmatrix}
        1         &  0          &  1            &  1         \\
        0         &  1          & -1            & -1         \\
        1         & -1          & -1            & -2         \\
        1         & -1          & -2            & -1.
\end{pmatrix}.
\end{equation}
The half-space corresponding to this facet is defined by the inequality $\Tr(\rho F) \leq 1$. For all $24$ rebit stabilizer states on $2$ qubits and for all $20$ CSS states on $2$ qubits, $\Tr(\rho F) \leq 1$. The inequality is tight for at least 9 states in each case. Since the dimensionality of the affine space the rebit states live in is $9$, this means that $\Tr(\rho F) \leq 1$ indeed defines a facet of both polytopes.

We considered the state $\rho_\lambda$, given by
\begin{equation}
    \rho_{\lambda} = \frac{1}{4}I + \lambda F,
\end{equation}
and noted that this state lies outside the rebit stabilizer polytope for $\lambda > 1/20$. Since the rebit stabilizer polytope contains all real stabilizer mixtures, and since $\rho_\lambda$ is real for all $\lambda$, this implies that $\rho_\lambda$ is magic for $\lambda > 1/20$.

Finally, to show that $\rho_\lambda$ is a bound magic state, we must find the $\lambda$ for which $\rho_\lambda$ is a KD-positive state. We start with KD-positivity. The KD-distribution of $\rho_\lambda$ is given by
\begin{equation}
    \KD(\rho_\lambda) = \frac{1}{15}
    \begin{pmatrix}
        1+12\lambda  &  1+ 4\lambda  &  1- 4\lambda  &  1+ 4\lambda  \\
        1- 4\lambda  &  1+ 4\lambda  &  1+12\lambda  &  1+ 4\lambda  \\
        1-12\lambda  &  1+12\lambda  &  1-12\lambda  &  1- 4\lambda  \\
        1-12\lambda  &  1- 4\lambda  &  1-12\lambda  &  1+12\lambda
    \end{pmatrix}.
\end{equation}
$\rho_\lambda$ therefore is a KD positive operator for $\lambda \in [-1/12,1/12]$. 

Next, we need to find the values of $\lambda$ for which $\rho_\lambda$ is indeed a quantum state. By construction, $\rho_\lambda$ is Hermitian and trace $1$ for all $\lambda$, so for $\rho_\lambda$ to be a quantum state, we only need to ensure that it is positive semidefinite. We may ensure positive semidefiniteness by requiring all the eigenvalues of $\rho_\lambda$ to be nonnegative. Letting $\mu_j$ denote the eigenvalues of $\rho_\lambda$, the characteristic equation shows that 
\begin{equation}
    \det(\lambda F - (\mu_j -1/4)I) = 0.
\end{equation}
Denoting the eigenvalues of $F$ by $\nu_j$, we thus find that
\begin{equation}\label{eq:eigenvaluesofrhoandF}
    \mu_j = \lambda \nu_j + 1/4. 
\end{equation}
The minimal eigenvalue of $F$ may be calculated to be $-1-2\sqrt{2}$ using standard methods. Using Eq. \eqref{eq:eigenvaluesofrhoandF}, we find that $\rho_\lambda$ is positive semidefnite and thus a state for $\lambda \in [0,1/(4+8\sqrt{2})]$. This is a stronger requirement than KD-positivity, so we find that $\rho_\lambda$ is a KD positive state for $\lambda \in [0,1/(4+8\sqrt{2})]$.

Putting these results together, we find that $\rho_\lambda$ is both magic and a KD-positive state for $\lambda \in ]1/20,1/(4+8\sqrt2)]$. Since KD-positive states are simulable, $\rho_\lambda$ must therefore be a bound magic state for $\lambda \in ]1/20,1/(4+8\sqrt2)]$.

\section{Numerical Construction of Bound States above CSS Facets}
In this Supplemental Note, we numerically show that there exist bound states above every facet shared between the CSS and rebit stabilizer polytope. To do this, we first need an explicit description of the rebit stabilizer and CSS polytopes. We may express any $2$-qubit state $\rho$ in the Pauli basis using
\begin{equation}
    \rho = \sum_{P_1,P_2 \in \{I,X,Y,Z\}} r_{P_1,P_2}(\rho)P_1\otimes P_2,
\end{equation}
where $r_{P_1,P_2}(\rho)$ are the coordinates of $\rho$ in this basis. Note that $r_{P_1,P_2}(\rho)$ must be real and that $r_{I,I}(\rho) = 1/4$. This means that the coordinates of the stabilizer states live in an affine subspace of $\mathbb{R}^{16}$. 

To find all the facets of the rebit stabilizer and CSS polytopes, we express the rebit stabilizer and CSS states in the Pauli basis. There are $24$ 2-qubit rebit stabilizer states, of which $20$ are CSS. We thus get $24$ points in $\mathbb{R}^{16}$, of which $20$ belong to CSS states. We use cdd \cite{Fukuda1995}, a numerical package for analyzing polytopes in $\mathbb{R}^m$, to find the facets of these polytopes. We find that the rebit stabilizer polytope has $120$ facets and the CSS polytope has $40$ facets. Between these two sets of facets, exactly $24$ facets coincide.

We take the normal of each shared facet $F$ and consider a state $\rho_\lambda = I/4 + \lambda F$ along this normal as in Note~V. Using a bisection method, we numerically find $\lambda_{\textrm{KD+}}$, the maximal $\lambda$ for which $\rho_\lambda$ is KD positive. Similarly, we also numerically find $\lambda_{\textrm{SD}}$, the largest value of $\lambda$ for which $\rho_\lambda$ only has positive eigenvalues and thus is a valid density matrix.

To determine whether a state is magic, we consider the following linear program:
\begin{align}\label{eq:linear_program_for_deciding_magic}
    \max_{p_S} \;0 &\;\;\textrm{  s.t.  }\;\;\sum_S p_S S = \rho_\lambda,\nonumber\\ 
                   &\;\;\textrm{  and  }\;\; \sum_S p_S = 1,\nonumber\\
                   &\;\;\textrm{  and  }\;\; p_S \geq 0,
\end{align}
where the sums are taken over all $2$-qubit stabilizer states. If this linear program is feasible, then $\rho_\lambda$ must be a stabilizer mixture. If the linear program is infeasible, then $\rho_\lambda$ must be magic. Using a bisection method with a linear program solver as a subroutine, we numerically find $\lambda_{\textrm{magic}}$, the largest value of $\lambda$ for which $\rho_\lambda$ is a stabilizer mixture.

For every shared facet, we find $\lambda_{\textrm{KD+}} = 0.083$, $\lambda_{\textrm{SD}} = 0.065$ and $\lambda_{\textrm{magic}} = 0.050$. These values agree up to numerical precision with those obtained using the analytical approach in Note V. Since $\lambda_{\textrm{magic}} \leq \lambda_{\textrm{KD+}}$ and $\lambda_{\textrm{magic}} \leq \lambda_{\textrm{SD}}$, we have found bound states above every shared facet of the stabilizer polytope.

\section{Estimation of Relative Volume of Bound States}
In this Supplemental Note, we describe our methodology for obtaining the data presented in Table I of the main article. We aim to estimate the relative volumes that the four simulation categories occupy in the two-qubit rebit state space. Here, the volume is estimated by the ratio of the number of random samples falling into a given category to the total number of sampled states. 

To ensure unbiased sampling, we generate random rebit density matrices using the real Ginibre ensemble. This construction guarantees uniform sampling under the Hilbert-Schmidt measure, as formally established in Ref.~\cite{zyczkowski2001induced}. Let $A \in \mathbb{R}^{N \times N}$ be a random matrix with entries independently drawn from the standard normal distribution $\mathcal{N}(0, 1)$. The real Ginibre ensemble
\begin{equation}
\rho = \frac{A A^\top}{\mathrm{Tr}[A A^\top]}
\end{equation}
is a valid rebit density matrix: real, symmetric, positive semidefinite, and normalized. The following algorithm describes the practical generation of such rebit states:

\begin{algorithm}[H]
\caption{Sampling a Random $n$-Rebit Density Matrix via Ginibre Ensemble}
\label{sample}
\label{sample_ginibre}
\begin{algorithmic}[1]
\Require Number of rebits $n$
\Ensure A valid $n$-rebit density matrix $\rho \in \mathbb{R}^{2^n \times 2^n}$
\State Let $d \gets 2^n$
\State Sample $A \in \mathbb{R}^{d \times d}$ where each $A_{ij} \sim \mathcal{N}(0, 1)$ \Comment{Real Ginibre matrix}
\State Compute: $\rho \gets A A^\top$ \Comment{Ensure symmetry and positivity}
\State Normalize: $\rho \gets \rho / \operatorname{Tr}[\rho]$ \Comment{Ensure unit trace}
\State \Return $\rho$
\end{algorithmic}
\end{algorithm}

The computational bottleneck in our simulations was the classification of the random samples. After obtaining a randomly sampled rebit state, we first compute the KD distribution to verify its positivity. Then, we apply the methodology described in Eqn. \eqref{eq:linear_program_for_deciding_magic} to determine whether the state lies within the stabilizer polytope. Sampling and state classifying 1 billion $2$-qubit states took approximately 3.5 days on a x86\_64 processor with 24 physical cores and 32 logical CPUs on Intel(R) Core(TM) i9-13900K.

\section{Proof of Theorem 2}
In this Supplemental Note, we prove Theorem 2 of the main article. We need to show that KD mana is faithful, additive, and nonincreasing under free operations. We start by showing that it is faithful.

\begin{Lemma}
    Let $\rho$ be a quantum state. Then $\mana(\rho) = 0 $ if and only if $\rho$ is KD positive.
\end{Lemma}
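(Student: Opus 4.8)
The plan is to exploit the fact that the $\G$-KD distribution is a quasiprobability distribution that sums to one, so that faithfulness of $\mana$ reduces to a characterization of when the $\ell_1$-norm of $\KD(\rho)$ equals one. First I would establish the normalization identity $\sum_{\g,\c}\KD_{\g,\c}(\rho) = 1$ for every density matrix $\rho$. This follows because $\sum_{\g,\c} B_{\g,\c} = \sum_{\g,\c} \xket{\c}\xzbraket{\c}{\g}\zbra{\g}$; summing over $\g$ first gives $\big(\sum_{\c}\xket{\c}\xbra{\c}\big)\big(\sum_{\g}\zket{\g}\zbra{\g}\big) = I\cdot I = I$ up to the overlap factors working out, so that $\sum_{\g,\c}\Tr(B_{\g,\c}\rho) = \Tr(\rho) = 1$. (More carefully, one uses $\sum_{\c}\c(\g)\overline{\c(\g')} = |G|\delta_{\g,\g'}$ together with the explicit form of $\xket{\c}$, which collapses the double sum to $\Tr\rho$.)

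Granting $\sum_{\g,\c}\KD_{\g,\c}(\rho) = 1$, the triangle inequality gives $\sum_{\g,\c}|\KD_{\g,\c}(\rho)| \geq \big|\sum_{\g,\c}\KD_{\g,\c}(\rho)\big| = 1$, hence $\mana(\rho) = \log\sum_{\g,\c}|\KD_{\g,\c}(\rho)| \geq 0$ always, with equality if and only if $\sum_{\g,\c}|\KD_{\g,\c}(\rho)| = 1$. The next step is to argue that equality in the triangle inequality $\sum|\KD_{\g,\c}(\rho)| = |\sum \KD_{\g,\c}(\rho)|$ forces every $\KD_{\g,\c}(\rho)$ to be a nonnegative real number: the standard equality condition for $|\sum_k z_k| = \sum_k |z_k|$ over complex numbers $z_k$ is that all nonzero $z_k$ share a common phase; since their sum is the positive real number $1$, that common phase must be $0$, so each $\KD_{\g,\c}(\rho) \geq 0$. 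By definition this says precisely that $\rho$ is KD positive. Conversely, if $\rho$ is KD positive then all $\KD_{\g,\c}(\rho) \geq 0$ and $\sum_{\g,\c}|\KD_{\g,\c}(\rho)| = \sum_{\g,\c}\KD_{\g,\c}(\rho) = 1$, so $\mana(\rho) = \log 1 = 0$.

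The only real content is the normalization identity $\sum_{\g,\c}\KD_{\g,\c}(\rho) = 1$; everything else is the elementary equality analysis of the triangle inequality and the definition of KD positivity. I expect the main obstacle — such as it is — to be a clean verification of $\sum_{\g,\c} B_{\g,\c} = I$, which is a short character-sum computation using $\xket{\c} = |G|^{-1/2}\sum_{\g}\c(\g)\zket{\g}$ and the orthogonality relation $\sum_{\c\in\hat G}\c(\g) = |G|\,\delta_{\g,\z}$; alternatively one can cite informational completeness together with the fact that the DGBR distribution $W = \Re\,\KD$ is a genuine quasiprobability distribution summing to $1$ (so the real part already sums to $1$) and separately check that the imaginary parts sum to zero by the Hermiticity symmetry $B_{\g,\c}^{\dagger}$ appearing symmetrically. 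Either route is routine, so I would simply present the character-sum computation and then conclude with the two-line triangle-inequality argument above.
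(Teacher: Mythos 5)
Your proof is correct and follows essentially the same route as the paper's: both reduce the claim to the normalization $\sum_{\g,\c}\KD_{\g,\c}(\rho)=1$ together with the elementary fact that a complex vector with unit sum has $\ell_1$-norm equal to $1$ if and only if all its entries are nonnegative reals. The paper packages that last step as the term-by-term inequality $|z|-\Re z\geq 0$ (summing to $\sum|\KD|-1$) rather than your equality-case analysis of the triangle inequality, but this is a cosmetic difference, not a different argument.
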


\begin{proof}
    We prove the forward direction by proving the contrapositive: if $\rho$ is KD nonpositive, then $\mana(\rho) \neq 0$. Suppose $\rho$ is KD nonpositive. Then there must exist a $\delta> 0$ such that there exist $\g_0,\c_0 \in G\times\hat{G}$ such that either $|\Re \left[\KD_{\g_0,\c_0}(\rho)\right]| - \Re \left[\KD_{\g_0,\c_0}(\rho)\right] \geq \delta$ or $|\Im\left[\KD_{\g_0,\c_0}(\rho)\right] |\geq \delta$. This implies that there exist $\delta, \g_0, \c_0$ such that $|\KD_{\g,_0\c_0}(\rho)| - \Re\left[\KD_{\g_0,\c_0}(\rho)\right] \geq \delta$.

    Next, we consider minus one plus the sum of the absolute values of the entries of the KD-distribution:
    \begin{equation}\label{eq:KDnegexpanded}
        -1 + \sum_{\g,\c}|\KD_{\g,\c}(\rho)|  = \sum_{\g,\c}\left(|\KD_{\g,\c}(\rho)| - \Re\left[\KD_{\g,\c}(\rho)\right]\right),
    \end{equation}
    where we used that the real part of the KD distribution sums to 1. Each term in the sum on the right-hand side of Eq. \eqref{eq:KDnegexpanded} is nonnegative. Since $\rho$ is KD nonpositive, at least one of the terms in the sum must be greater than $\delta$. We thus find that $-1 + \sum_{\g,\c}|\KD_{\g,\c}(\rho)| \geq \delta$. Rearranging and taking logarithms yields
    \begin{equation}
        \mana(\rho) \geq \log (1+\delta) \neq 0,
    \end{equation}
    since $\delta >0$. We have thus proven the contrapositive, establishing the result.

    We now move on to the backward direction. If $\rho$ is KD positive, then $\KD_{\g,\c}(\rho) \in [0,1]$. Hence $|\KD_{\g,\c}(\rho)| = \KD_{\g,\c}(\rho)$. Since the KD distribution is normalized to one, we find that $\sum_{\g,\c}|\KD_{\g,\c}(\rho)| = \sum_{\g,\c}\KD_{\g,\c}(\rho) =1$. The result follows upon taking logarithms of both sides.
\end{proof}

Next, we prove that the KD mana is additive.

\begin{Lemma}
    Let $\rho, \sigma$ be quantum states. Then, $\mana(\rho\otimes\sigma) = \mana(\rho) + \mana(\sigma)$.
\end{Lemma}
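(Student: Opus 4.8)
The goal is to show additivity of the KD mana, $\mana(\rho \otimes \sigma) = \mana(\rho) + \mana(\sigma)$. The natural strategy is to invoke the product rule for the $\G$-KD distribution established in Theorem~1.4 of the main article, which gives $\KD_{(\g,\g'),(\c,\c')}(\rho \otimes \sigma) = \KD_{\g,\c}(\rho)\,\KD_{\g',\c'}(\sigma)$ whenever the index pairs are split according to a tensor-product factorization of the Hilbert space.

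The plan is as follows. First, I would write out the definition $\mana(\rho \otimes \sigma) = \log \sum_{(\g,\g'),(\c,\c')} |\KD_{(\g,\g'),(\c,\c')}(\rho \otimes \sigma)|$, where the sum runs over all phase-space points of the composite system $G \times G' \times \hat{G} \times \hat{G'}$. Next I would substitute the product rule from Theorem~1.4 to replace each term by $|\KD_{\g,\c}(\rho)\,\KD_{\g',\c'}(\sigma)| = |\KD_{\g,\c}(\rho)|\cdot|\KD_{\g',\c'}(\sigma)|$, using multiplicativity of the complex modulus. Then the double sum factors:
\begin{equation}
\sum_{\g,\c}\sum_{\g',\c'} |\KD_{\g,\c}(\rho)|\,|\KD_{\g',\c'}(\sigma)| = \left(\sum_{\g,\c} |\KD_{\g,\c}(\rho)|\right)\left(\sum_{\g',\c'} |\KD_{\g',\c'}(\sigma)|\right).
\end{equation}
Taking the logarithm and using $\log(ab) = \log a + \log b$ for the (strictly positive, since the distributions are normalized and informationally complete) factors then yields $\mana(\rho) + \mana(\sigma)$, completing the proof.

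There is essentially no obstacle here: the only thing to be careful about is that the phase-space points of the composite system genuinely decompose as claimed, i.e.\ that the $\G$-KD distribution for $G \times G'$ is built from the product bases so that Theorem~1.4 applies verbatim. This is guaranteed by the construction of the $\G$-KD distribution on an Abelian group together with the observation that $\mathbb{Z}_2^{n_1} \times \mathbb{Z}_2^{n_2} \cong \mathbb{Z}_2^{n_1+n_2}$, with the characters and the $\xket{\c}$ states factorizing accordingly. I would state this one-line justification and then carry out the three-line computation above. The result is immediate once the product rule is in hand, so this lemma is routine and I would present it compactly.
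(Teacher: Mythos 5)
Your proposal is correct and follows essentially the same route as the paper's proof: apply the product rule of Theorem~1.4, use multiplicativity of the modulus to factor the double sum, and take logarithms. The extra remark about the factorization of the phase-space points is a reasonable (if brief) addition that the paper leaves implicit.
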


\begin{proof}
    Let $\g,\c$ denote the indices for the KD distribution of $\rho$, and let $\g',\c'$ denote the indices for the KD distribution of $\sigma$. Evaluating the KD mana of $\rho\otimes\sigma$ then yields
    \begin{align}
        \mana(\rho\otimes\sigma) &= \log \sum_{\g\g'\c\c'}|\KD_{(\g,\g'),(\c,\c')}(\rho\otimes\sigma)|\nonumber\\
                                 &= \log \left(\sum_{\g\c}|\KD_{\g,\c}(\rho)|\sum_{\g'\c'}|\KD_{\g',\c'}(\sigma)|\right)\nonumber\\
                                 &= \mana(\rho) + \mana(\sigma),
    \end{align}
    where we used Theorem 1.4 of the main paper to go from the first to the second line.
\end{proof}

Finally, we prove that the KD-mana is a resource monotone.

\begin{Lemma}
    $\mana$ is nonincreasing under the free operations and is thus a monotone.
\end{Lemma}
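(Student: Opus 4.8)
The plan is to show that $\mana$ does not increase under each of the elementary free operations---preparation of CSS states, the unitaries $\Hadamard^{\otimes n}$, $\CX_{ct}$, $\Pauli_{\g,\c}$, computational-basis measurement, and discarding of a subsystem---and to handle the adaptive structure by noting that postselection/mixing of branches cannot increase the quantity either. Since an arbitrary DGBR computation is a composition of these primitives, establishing the claim for each primitive (together with additivity from the preceding lemma for the ``append a free state'' step) suffices.

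First I would dispense with the unitaries. For each Clifford unitary in the DGBR model, the covariance relations \eqref{eq:covariance_of_paulisS}--\eqref{eq:covariance_of_CNOTsS} say that the KD distribution of the evolved state is obtained from $\KD(\rho)$ by a permutation of the index set $G\times\hat G$ composed possibly with entrywise complex conjugation (in the Hadamard case). Neither operation changes the multiset $\{|\KD_{\g,\c}(\rho)|\}_{\g,\c}$, so $\sum_{\g,\c}|\KD_{\g,\c}|$ is invariant and hence $\mana$ is invariant under these gates. Next, appending a CSS state $\ket{\psi}$: by Theorem~1.2 (Hudson's theorem) $\ket{\psi}\bra{\psi}$ is KD positive, so $\mana(\ket{\psi}\bra{\psi})=0$ by faithfulness, and then the additivity lemma gives $\mana(\rho\otimes\ket{\psi}\bra{\psi})=\mana(\rho)$. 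Discarding (partial trace over) a subsystem: here I would use that $\KD$ respects the product structure together with informational completeness to express $\KD_{\g,\c}(\Tr_2\rho)$ as a partial sum $\sum_{\g',\c'}\KD_{(\g,\g'),(\c,\c')}(\rho)$ (this follows from Theorem~1.4 applied to $\rho=\rho_1\otimes\rho_2$ and extended by linearity, using that the marginal phase-space point operators sum correctly), and then the triangle inequality gives $\sum_{\g,\c}|\KD_{\g,\c}(\Tr_2\rho)|\le\sum_{\g,\g',\c,\c'}|\KD_{(\g,\g'),(\c,\c')}(\rho)|$, i.e. $\mana(\Tr_2\rho)\le\mana(\rho)$.

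The main obstacle is the measurement step, since a projective measurement with classical post-processing is not unitary and the naive bound must account for the adaptive branching. I would argue as follows. Conditioned on outcome $b$ on qubit $j$, Lemma~S5 (``updating KD distributions'') gives $\KD_{\g,\c}(\rho') = \tfrac12[\KD_{\g,\c}(\rho)+\KD_{\g,\c+\mathbf e_j}(\rho)]$ for the (normalized) post-measurement state compatible with that bound---but crucially the normalization used there divides by the outcome probability $p_b=\mathbb P(Z_j=b\mid\rho)=\sum_{\g,\c:g_j=b}\KD_{\g,\c}(\rho)$. Tracking the \emph{unnormalized} post-measurement objects $\tilde\rho_b$, one has $\KD_{\g,\c}(\tilde\rho_b)=\tfrac12\sum_{\h:\,h_j=b,\ h+\c\in\{\z,\mathbf e_j\}}\cdots$; summing $\sum_b\sum_{\g,\c}|\KD_{\g,\c}(\tilde\rho_b)|$ and applying the triangle inequality collapses the $\tfrac12(\cdot+\cdot)$ averages and the sum over $b$ to recover exactly $\sum_{\g,\c}|\KD_{\g,\c}(\rho)|$. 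Hence $\sum_b p_b\bigl(\sum_{\g,\c}|\KD_{\g,\c}(\rho_b')|\bigr)\le \sum_{\g,\c}|\KD_{\g,\c}(\rho)|$, and since $\log$ is concave and monotone, $\sum_b p_b\,\mana(\rho_b')\le\log\sum_b p_b\bigl(\sum_{\g,\c}|\KD_{\g,\c}(\rho_b')|\bigr)\le\mana(\rho)$. Thus the \emph{expected} mana after measurement does not exceed the mana before, which is the appropriate monotone condition for an adaptive model; since future operations are themselves non-mana-increasing conditioned on each branch, composing gives the result for the whole computation.

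To finish, I would assemble these pieces: any DGBR circuit on an input $\rho$ is a finite composition of the primitives above (gates, measurements with feed-forward, and implicit discarding of measured or unused qubits at the end), each of which is mana-nonincreasing in expectation, so the expected output mana is at most $\mana(\rho)$; in particular for deterministic free operations $\mana$ is literally nonincreasing, establishing that $\mana$ is a monotone for the resource theory whose free states are the CSS mixtures and whose free operations are those of the DGBR model. One subtlety worth a remark is consistency with faithfulness: a CSS mixture has $\mana=0$ and stays at $0$ under all free operations, as it must, because the free operations preserve the CSS polytope; the bound above is consistent with this since all terms in the collapsed triangle inequality are then equalities.
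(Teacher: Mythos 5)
Your proposal is correct and follows essentially the same route as the paper: unitaries are handled by permutation-invariance of the multiset of KD values (Theorem 1.3), CSS preparation by faithfulness plus additivity, partial trace by writing the marginal KD distribution as a partial sum and applying the triangle inequality, and measurement via the KD-update lemma plus the triangle inequality. The only difference is that you track the outcome probabilities $p_b$ explicitly and prove nonincrease of the \emph{expected} mana over measurement branches (via concavity of $\log$), whereas the paper works with the non-selective ($1/2$-weighted) update; your version is a slightly more careful treatment of the same step, not a different argument.
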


\begin{proof}
    We need to show that $\mana$ is nonincreasing under the action of the unitaries $\Pauli_{\g,\c}, \Hadamard^{\otimes n}$, and $\CX$, under measurements in the computational basis, under the initialization of CSS states and under partial traces.

    From Theorem 1.3, we know that the unitaries $\Pauli_{\g,\c}, \Hadamard^{\otimes n}, \CX$ permute the entries of the KD-distribution. The KD mana sums over all entries of the KD distribution, and is therefore insensitive to the ordering of the entries. Hence $\mana$ is invariant under the action of these unitaries.

    We now show that $\mana$ is nonincreasing under the action of computational basis measurements. Lemma \ref{lem:updating_KD_distributions} shows that if the $j$th qubit of a state $\rho$ is measured, then the KD distribution of the updated state $\rho'$ is
    \begin{equation}
        \KD_{\g,\c}(\rho') = \frac{1}{2}\left[\KD_{\g,\c}(\rho) + \KD_{\g,\c+\mathbf{e}_j}(\rho)\right],
    \end{equation}
    where $\mathbf{e}_j$ is the vector containing a $1$ in the $j$th entry and zero everywhere else. Evaluating the sum of the absolute values of the entries of the KD distribution then yields
    \begin{align}
        \sum_{\g,\c}|\KD_{\g,\c}(\rho')| &= \frac{1}{2} \sum_{\g,\c}|\KD_{\g,\c}(\rho) + \KD_{\g,\c+\mathbf{e}_j}(\rho)|\nonumber\\
        &\leq \frac{1}{2}\sum_{\g,\c}|\KD_{\g,\c}(\rho)|  + \frac{1}{2}\sum_{\g,\c}|\KD_{\g,\c+\mathbf{e}_j}(\rho)|\nonumber\\
        &= \sum_{\g,\c}|\KD_{\g,\c}(\rho)|.
    \end{align}
    Taking logarithms of both sides then yields the desired result.

    To show that $\mana$ is nonincreasing when CSS states are appended, note that if $\sigma$ is a CSS state, then $\mana(\sigma) = 0$ by Theorem 1.2 and Theorem 2.1 of the main article. Additivity then implies that for any $\rho$, $\mana(\rho\otimes\sigma) = \mana(\rho)$, showing that $\mana$ is constant and thus nonincreasing under the appending of CSS states.

    Last, we need to show that $\mana$ is nonincreasing under partial traces. Without loss of generality, we will consider tracing over the first qubit of the system. We let $g',\chi'$ denote the group element and character indexing the first qubit, and we let $\g,\c$ denote the group element and character indexing the other qubits. We may calculate KD distribution of the partial trace of a density matrix $\rho$ on $n$ qubits to be
    \begin{align}
        \KD_{\g,\c}\left(\sum_{g'}(\zbra{g'}\otimes I_{2^{n-1}}) \rho (\zket{g'}\otimes I_{2^{n-1}}) \right) &= \sum_{g'} \xzbraket{\g}{\c}(\zbra{g'}\otimes\xbra{\c}) \rho (\zket{g'}\otimes\zket{\g})\nonumber\\
        &= \sum_{g'\chi'}\zxbraket{g'\g}{\chi'\c} \xbra{\chi'\c}\rho\zket{g'\g}\nonumber\\
        &= \sum_{g'\chi'} \KD_{(g',\g),(\chi',\c)}(\rho).
    \end{align}
    Partial tracing over a qubit is therefore the same as summing over the relevant index in the KD distribution. This allows us to calculate the sum of the absolute values of the elements of a partially-traced KD distribution:
    \begin{align}
        \sum_{\g,\c}\left|\KD_{\g,\c}\left(\sum_{g'}(\zbra{g'}\otimes I_{2^{n-1}}) \rho (\zket{g'}\otimes I_{2^{n-1}}) \right)\right| &= \sum_{\g\c}\left|\sum_{g'\chi'} \KD_{(g',\g),(\chi',\c)}(\rho)\right|\nonumber\\
        &\leq \sum_{g'\g\chi'\c} \left| \KD_{(g',\g),(\chi',\c)}(\rho)\right|,
    \end{align}
    where we used the triangle inequality to go from the first line to the second line. The desired result follows upon taking logarithms, completing the proof.
\end{proof}

\end{document}